\theoremstyle{definition}
\newtheorem{proposition}[theorem]{Proposition}
\newtheorem{fact}[theorem]{Fact}
\newcommand*\patchAmsMathEnvironmentForLineno[1]{%
    \expandafter\let\csname old#1\expandafter\endcsname\csname #1\endcsname
    \expandafter\let\csname oldend#1\expandafter\endcsname\csname end#1\endcsname
    \renewenvironment{#1}%
    {\linenomath\csname old#1\endcsname}%
{\csname oldend#1\endcsname\endlinenomath}}% 
\newcommand*\patchBothAmsMathEnvironmentsForLineno[1]{%
    \patchAmsMathEnvironmentForLineno{#1}%
\patchAmsMathEnvironmentForLineno{#1*}}%
\newcommand{\nop}[1]{}
\title{Regular Language Distance and Entropy}
\begin{document}

\author[1]{Austin J. Parker}
\author[2]{Kelly B. Yancey}
\author[3]{Matthew P. Yancey}

\affil[1]{\small{Institute for Defense Analyses - Center for Computing Sciences, Bowie MD, USA; ajpark2@super.org}}
\affil[2]{\small{University of Maryland, College Park MD, USA; Institute for Defense Analyses - Center for Computing Sciences, Bowie MD, USA; kbyancey1@gmail.com}}
\affil[3]{\small{Institute for Defense Analyses - Center for Computing Sciences, Bowie MD, USA; mpyancey1@gmail.com}}
 
\keywords{regular languages, channel capacity, entropy, Jaccard, symbolic dynamics}

\maketitle

\begin{abstract}
    This paper addresses the problem of determining the distance between two
    regular languages. It will show how to expand Jaccard distance, which
    works on finite sets, to potentially-infinite regular languages.

    The entropy of a regular language plays a large role in the extension.
    Much of the paper is spent investigating 
    the entropy of a regular language.  This includes addressing
    issues that have required previous authors to rely on the upper limit of
    Shannon's traditional formulation of entropy, because its limit
    does not always exist
    \cite{ceccherini2003entropy, cui2013similarity,kuich1970entropy}.
    The paper also includes proposing a new limit based formulation for the
    entropy of a regular language and proves that formulation to both exist and
    be equivalent to Shannon's original formulation (when it exists).
    Additionally, the proposed formulation is shown to equal an analogous but
    formally quite different notion of topological entropy from Symbolic
    Dynamics -- consequently also showing Shannon's original formulation to be
    equivalent to topological entropy.

    Surprisingly, the natural Jaccard-like entropy distance is
    trivial in most cases. Instead, the {\it entropy sum} distance metric
    is suggested, and shown to be granular in certain situations.

\end{abstract}

\section{Introduction}
\label{sec:intro}

In this paper we study distances between regular expressions.  There are many motivations for this analysis. Activities
in bioinformatics, copy-detection \cite{cui2013similarity}, and network defense sometimes require large
numbers of regular expressions be managed, and metrics aid in indexing and
management of those regular expressions \cite{chan2003re}.  Further,
understanding the distance between regular languages requires an investigation
of the structure of regular languages that we hope eliminates the need for
similar theoretical investigations in the future.

A natural definition of the distance between regular languages
$L_1$ and $L_2$ containing strings of symbols from $\Sigma$
is: $\lim_{n\rightarrow\infty} \frac{\left|L_1\cap L_2\cap \Sigma^n\right|}{\left|(L_1\cup L_2)\cap \Sigma^n\right|}$.
However, this definition has a fundamental flaw: the limit does not always
exist.  Consider the distance between $(aa)^*$ and $a^*$.
When $n$ is even, the fraction is $1$, while when $n$ is odd the fraction
is $0$. Thus, the limit given above is not well defined for those two languages.

This paper addresses that flaw and examines the question of entropy and
distance between regular languages in a more general way. A fundamental
contribution will be a limit-based distance related to the above that (1)
exists, (2) can be computed from the Deterministic Finite Automata for the
associated regular language, and (3) does not invalidate expectations about the
distance between languages.

The core idea is two-fold: (1) to rely on the number of strings
{\it up-to} a given length rather than strings {\it of} a given length and (2) to use Ces\'aro averages to smooth out the behavior of the limit.  These ideas led us to develop the Ces\'aro Jaccard distance, which is proved to be well-defined in Theorem \ref{J_C_defined}.

Tied up in this discussion will be the {\it entropy} of a regular
language, which is again a concept whose common definition needs tweaking due
to limit-related considerations.

This paper is structured as follows. In Section~\ref{sec:definitions} we define
terms that will be used in the paper. Of particular importance is
Table~\ref{tab:distances}, which includes all of the distance functions defined
in this paper.  As the Jaccard distance is a natural entry point into distances
between sets, Section~\ref{sec:jaccard} will discuss the classical Jaccard distance and how best
to extend it to infinite sets. Section~\ref{sec:entropy} will discuss notions
of regular language entropy, introducing a new formulation and proving it
correct from both a channel capacity and a topological entropy point of view.
Section~\ref{sec:entropydistance} will introduce some distances based on
entropy, and show that some of them behave well, while others do not. Finally,
Section~\ref{sec:conclusion} provides a conclusion and details some potential
future work.

%%%%%%%%%%%%%%%%%%%%%%%%%%%%%%%%%%%%%%%%%%%%%%%%%%%%%%%%%%%%%%%%%%%%%%%%%%%%%%%%%%%%%%%%%

\section{Background}\label{sec:background}

%%%%%%%%%%%%%%%%%%%%%%%%%%%%%%%%%%%%%%%%%%%%%%%%%%%%%%%%%%%%%%%%%%%%%%%%%%%%%%%%%%%%%%%

\subsection{Related Work}
\label{sec:related}

Chomsky and Miller's seminal paper on regular languages
\cite{chomsky1958finite} does not address distances between regular languages.
It uses Shannon's notion of channel capacity (equation 7 from \cite{chomsky1958finite}) for
the entropy of a regular language: $h(L) = \lim_{\lambda\rightarrow\infty}
            \frac {\log \left|L\cap \Sigma^\lambda\right|} {\lambda} .$

While Shannon says: ``the limit in question will exist as a finite number in
most cases of interest'' \cite{shannon48mathematical},
that limit does not always exist for regular languages (consider
$(\Sigma^2)^*$).
That fact can be seen as motivating some of the analysis in this paper. Chomsky
and Miller also examine the number of sentences {\it up to} a given length,
foreshadowing some other results in this paper.

Several works since Chomsky and Miller have used this same {\it of length exactly $n$}
formula to define the entropy of a regular language
\cite{cui2013similarity,kuich1970entropy,ceccherini2003entropy}.
These works define entropy as Chomsky and Miller, but add the caveat
that they use the upper limit when the limit does not exist.
There is even a paper on non-regular languages that uses the same
entropy definition \cite{kuich1970entropy}.
Here we provide foundation for those works by showing the upper limit to
be correct (Theorem~\ref{language_entropy}). 
Further, this paper suggests an equivalent expression for entropy that may be
considered more elegant: it is a limit that exists
as a finite number for all regular languages.

There is work examining distances between {\it unary} regular languages, or
regular languages on the single character alphabet ($|\Sigma|=1$)
\cite{dassow2009similarity}.  It introduces a definition for 
Jaccard distance that will appear in this paper: $1 - \lim_{t\rightarrow\infty}\frac{\left|L_1 \cap L_2 
                                \cap\left(\bigcup_{i=0}^t\Sigma^i\right)\right|}
               {\left|\left(L_1\cup L_2\right)\cap \left(\bigcup_{i=0}^t\Sigma^i\right)\right|}.$
Further, it gives a closed form for calculating that distance between two
unary regular languages. Apart from the focus on unary regular languages,
this paper differs from \cite{dassow2009similarity} in its analysis of
the distance functions presented.  In particular, you can conclude (as a consequence of Theorem \ref{limits of cesaro}) that the above equation is mostly trivial when applied to nonunary languages -- it returns $0$ or
$1$ ``most'' of the time.

More recently, Cui {\it et al} directly address distances between regular
languages using a generalization of Jaccard distance \cite{cui2013similarity}.
That paper usefully expands the concept of Jaccard distance to regular
languages by (1) using entropy to handle infinite sized regular languages
(they use the upper limit notion of entropy described above), and
(2) allowing operations other than intersection to be used in the numerator.
Further, Cui {\it et al} suggest and prove properties of several
specific distance functions between regular languages. The distance functions
in this paper do not generalize the Jaccard distance in the same way, but
are proven to be metrics or pseudo-metrics.

Ceccherini-Silberstein {\it et al} investigate the entropy of specific
kinds of subsets of regular languages \cite{ceccherini2003entropy}.
They present a novel proof of a known fact from Symbolic Dynamics.
They use the same upper limit notion of entropy as above.
Other entropy formulations include the number of prefixes of a regular language \cite{chang2014}, but this has only been proven equivalent to entropy under restricted circumstances.

Symbolic dynamics \cite{lind1995symbolic} studies,
among other things, an object called a sofic shift. Sofic shifts are
analogous to deterministic finite automata and their shift spaces are related to regular
languages. The formulation of entropy used in this field does not suffer from
issues of potential non-existence.
This paper includes a proof that the {\it topological entropy} of a sofic
shift is equivalent to language-centric formulations in this paper: see
Theorem~\ref{language_entropy}.

Other related results from symbolic dynamics include an investigation into the
computability of a sofic shift's entropy \cite{simonsen2006computability}
and a
discussion of the lack of relationship between entropy and complexity
\cite{li1991relationship}.

There is another proposal for the topological entropy of formal
(including regular) languages that does not agree with the notions
provided in this paper \cite{schneider2015topological}. That paper develops
interesting formal results surrounding their definition of entropy,
including showing it to be zero for all regular languages.  Because their
entropy is zero for all regular languages, it will not be helpful as a distance
function for regular languages.

Several regular language distance and similarity functions are suggested in
\cite{chan2003re}. That paper constructs a natural R-tree-like index of regular
expressions. The index allows for faster matching of a string against a large
number of regular expressions. To construct the index, several distance
functions are considered. These include a max-count measure, which considers
the number of strings in both languages with length less than some constant to
be the languages' similarity; a rate-of-growth measure, which divides the sum
of strings sized $k$ to $k+d$ in one language by the sum of strings sized $k$
to $k+d$ in another language; and a minimum description length measure, which
computes the number of bits needed to encode a path through an NFA.

%%%%%%%%%%%%%%%%%%%%%%%%%%%%%%%%%%%%%%%%%%%%%%%%%%%%%%%%%%%%%%%%%%%%%%%%%%%%%%%%%%%%%%%%%%%%%%%%%%%

\subsection{Definitions and Notation}
\label{sec:definitions}

\begin{table}
    \centering
    \begin{tabular}{|l|l|l|}
        \hline
        $J'_n(L_1, L_2)$ & $n$ Jaccard Distance &
            $\frac {|W_n(L_1\triangle L_2)|}{|W_n(L_1\cup L_2)|}$ \\ 
        \hline
        $J_n(L_1, L_2)$ & $n_\leq$ Jaccard Distance &
            $\frac {|W_{\leq n}(L_1\triangle L_2)|}
                   {|W_{\leq n}(L_1\cup L_2)|}$ \\ 
        \hline
        $J_C(L_1, L_2)$ & Ces\`aro Jaccard &
            $\lim_{n\rightarrow\infty}\frac 1 n \sum_{i=1}^n J_i(L_1, L_2)$ \\ 
        \hline
        $H(L_1, L_2)$ & Entropy Distance &
            $\frac {h(L_1\triangle L_2)} {h(L_1\cup L_2)}$ \\ 
        \hline
        $H_S(L_1, L_2)$ & Entropy Sum Distance &
            $h(L_1\cap\overline{L_2}) + h(\overline{L_1}\cap L_2)$ \\ 
        \hline
    \end{tabular}
    \caption{\label{tab:distances}
        The distance functions considered in this paper are listed in this table.
    }
\end{table}

In this paper $\Sigma$ will denote a set of \textit{symbols} or the alphabet. \textit{Strings} are concatenations of these symbols, with
$\epsilon$ representing the empty string. All $\log$ operations in this paper
will be with taken base 2.  Raising a string to the power $n$
will represent the string resulting from $n$
concatenations of the original string.  A similar notion applies to sets.  In this notation, $\Sigma^5$ represents all strings of 
length $5$ composed of symbols from $\Sigma$. The Kleene star, $*$, when
applied to a string (or a set) will represent the set containing strings resulting
from any number of concatenations of that string (or of strings in that set),
including the empty concatenation. Thus, $\Sigma^*$ represents all possible
strings comprised of symbols in $\Sigma$, including the empty string.

A \textit{regular language} is a set $L\subset\Sigma^*$ which can be
represented by a \textit{Deterministic Finite Automata}, DFA for short.  A \textit{DFA} is a 5-tuple
$(Q, \Sigma, \delta, q_0, F)$, where $Q$ is a set of states, $\Sigma$ is
the set of
symbols, $\delta$ is a function from $Q\times \Sigma$ to $Q$, $q_0\in Q$ is the
\textit{initial state} and $F\subset Q$ is a set of \textit{final states}.
A regular language can also be constructed by recursive applications of
concatenation (denoted by placing regular expressions adjacent to one another),
disjunction (denoted $|$), and Kleene star (denoted $*$), to strings and
the empty string. That this construction and the DFA are equivalent is
well known \cite{Hopcroft_Ullman}.

The DFA $(Q, \Sigma, \delta, q_0, F)$ can be thought of as a directed graph whose
vertices are $Q$ with edges from $q$ to $q'$ iff there is an $s\in \Sigma$ such that
$q'=\delta(q, s)$. The transition function $\delta$ provides a labeling of the graph, that is each edge
$(q, q')$ is labeled  by the symbol $s$ when $\delta(q, s) = q'$. The adjacency matrix $A$
for a DFA is the adjacency matrix for the corresponding graph.  Thus, entries in $A$ are given by $a_{q,q'}$, where $a_{q,q'}$ is the number of edges from vertex $q$ to vertex $q'$.

For a regular language $L$, let $W_n(L)$ denote the set of words in $L$ of
length exactly $n$, i.e. $W_n(L)= L\cap \Sigma^n$, and 
let $W_{\leq n}(L)$ denote the set of words in $L$ of length at most $n$, i.e. $W_{\leq n}(L)=L\cap(\bigcup_{i=0}^n\Sigma^i)$.

Finally, we will discuss when certain distance functions are metrics.  A \textit{metric} on the space $X$ is a function $d:X\times X\rightarrow \mathbb{R}$ that satisfies
\begin{enumerate}
\item $d(x,y)\geq 0$ with equality if and only if $x=y$ for all $x,y\in X$
\item $d(x,y)=d(y,x)$ for all $x,y\in X$
\item $d(x,z)\leq d(x,y)+d(y,z)$ for all $x,y,z\in X.$
\end{enumerate}
An \textit{ultra-metric} is a stronger version of a metric, with the triangle inequality (the third condition above) replaced with the ultra-metric inequality: $d(x,z)\leq \max\{d(x,y),d(y,z)\}$ 
for all $x,y,z\in X$.
Also, there exists a weaker version, called a \textit{pseudo-metric}, which allows $d(x,y) = 0$ when $x \neq y$.

%%%%%%%%%%%%%%%%%%%%%%%%%%%%%%%%%%%%%%%%%%%%%%%%%%%%%%%%%%%%%%%%%%%%%%%%%%%%%%%%%%%%%%%%%%%%%%%%%%%%%%%%%%%%%%%%%
\section{Jaccard Distances}
\label{sec:jaccard}

The Jaccard distance is a well-known distance function between finite sets.  For finite sets $A$ and $B$, the \textit{Jaccard distance} between them is given by $\frac{\left|A\triangle B\right|}{\left|A\cup B\right|}=1-\frac{\left| A\cap B\right|}{\left| A\cup B \right|}$ where $A\triangle B$ represents the symmetric difference between the two sets (if $A\cup B=\emptyset$ then the Jaccard distance is $0$).  This classical Jaccard distance is not defined for infinite sets and as such, is not a suitable distance function for infinite regular languages and will need to be modified.

%%%%%%%%%%%%%%%%%%%%%%%%%%%%%%%%%%%%%%%%%%%%%%%%%%%%%%%%%%%%%%%%%%%%%%%%%%%%%%%%%%%%%%%%%%%%%%%%%%%%%%%%%%%%%%%%%

\subsection{Jaccard Distances using $W_{n}$ and $W_{\leq n}$}\label{finite_Jaccard}

A natural method for applying Jaccard distance to regular languages is to fix $n$ and define it as follows:

\begin{definition}[$n$ Jaccard Distance]

Suppose $L_1$ and $L_2$ are regular languages.  Define the \textit{$n$ Jaccard distance} by $J_n'(L_1, L_2) = \frac {\left|W_n(L_1\triangle L_2)\right|} {\left|W_n(L_1\cup L_2)\right|}.$
If $|W_n(L_1\cup L_2)|=0$, then $J_n'(L_1, L_2) = 0$.
\end{definition}
For fixed $n$, the above is a pseudo-metric since it is simply
the Jaccard distance among sets containing only length $n$ strings. The
following proposition points out one deficiency of $J_n'$.

\begin{proposition} \label{prop:n_jaccard_ex}
There exists a set $S = \{L_1, L_2, L_3\}$ of unary regular languages with $L_2,L_3 \subset L_1$ such that for all $n$ there exists an $i \neq j$ such that $J_n'(L_i,L_j) = 0$.
\end{proposition}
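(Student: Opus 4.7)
The plan is to exhibit an explicit example. Over a unary alphabet $\Sigma=\{a\}$ each set $\Sigma^n$ contains exactly one word $a^n$, so $J_n'(L_i,L_j)=0$ is equivalent to $L_i$ and $L_j$ agreeing on whether $a^n$ is a member. This reduces the problem to finding three languages with a coordinated pattern of ``hits'' and ``misses'' on $\{a^n : n\geq 0\}$.

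I would take $L_1 = a^*$, $L_2 = (aa)^*$, and $L_3 = a(aa)^*$. The inclusions $L_2 \subset L_1$ and $L_3 \subset L_1$ are immediate (and strict, since $a \notin L_2$ and $\epsilon \notin L_3$), so $L_2,L_3 \subsetneq L_1$ as required.

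For the distance condition, I would split on the parity of $n$. When $n$ is even (including $n=0$), both $L_1$ and $L_2$ contain $a^n$, so $W_n(L_1 \triangle L_2) = \emptyset$ while $W_n(L_1 \cup L_2) = \{a^n\}$, giving $J_n'(L_1,L_2)=0$. When $n$ is odd, both $L_1$ and $L_3$ contain $a^n$, so by the same reasoning $J_n'(L_1,L_3)=0$. In every case some pair $(i,j)$ with $i\neq j$ achieves distance zero, proving the proposition.

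There is no real obstacle here: the ``hard part'' is only the conceptual observation that, in the unary setting, the $n$ Jaccard distance depends solely on agreement at the single word $a^n$, which makes any family of languages whose indicator sequences pairwise coincide infinitely often yield the claimed collapse. The construction above is arguably the simplest such family that additionally satisfies the proper-containment constraint.
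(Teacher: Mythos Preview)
Your proof is correct and uses exactly the same construction as the paper: $L_1 = a^*$, $L_2 = (aa)^*$, $L_3 = a(aa)^*$, with the parity split on $n$. You simply supply a bit more justification for the inclusions and the vanishing of $J_n'$ than the paper does.
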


\begin{proof}
Let $L_1 = a^*$, $L_2 = (aa)^*$, and $L_3 = a(aa)^*$.  Fix $n\in\mathbb{N}$.
If $n$ is even, then $J_n'(L_1, L_2 ) = 0$, and if $n$ is odd, then $J_n'( L_1, L_3 ) = 0$.
\end{proof}

\newpage

One may also use $W_{\leq n}$ in the definition of a distance function.

\begin{definition}[$n_{\leq}$ Jaccard Distance]
For regular languages $L_1$ and $L_2$, define the \textit{$n_{\leq}$ Jaccard distance} by $J_n(L_1,L_2)=\frac{\left|W_{\leq n}(L_1\triangle L_2)\right|}{\left|W_{\leq n}(L_1\cup L_2)\right|}.$
If $|W_{\leq n}(L_1\cup L_2)|=0$, then $J_n(L_1, L_2) = 0$.
\end{definition}

The issue with $J_n'$ pointed out by Proposition~\ref{prop:n_jaccard_ex} can be proven to not be a problem for
$J_n$: see the first point of Theorem~\ref{jaccard pseudo-metric}.

\begin{theorem} \label{jaccard pseudo-metric}
The function $J_n$ defined above is a pseudo-metric and satisfies the following:
\begin{enumerate}
	\item Let $S = \{L_1, \ldots, L_k\}$ be a set of regular languages.  There exists an $n$ such that $J_n$ is a metric over $S$.  Moreover, we may choose $n$ such that 
		$n \leq \max_{i,j} (s(L_i) + 1)(s(L_j) + 1) - 1$ where $s(L_i)$ represents the number of states in the minimal DFA corresponding to $L_i$.
	\item For any fixed $n$ there exists regular languages $L, L'$ with $L\neq L'$ such that $J_n(L, L') = 0$.
\end{enumerate}
\end{theorem}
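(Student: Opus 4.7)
The plan is to reduce everything to the classical Jaccard distance on finite sets. For any regular languages $L_1, L_2$, the identities $W_{\leq n}(L_1 \triangle L_2) = W_{\leq n}(L_1) \triangle W_{\leq n}(L_2)$ and $W_{\leq n}(L_1 \cup L_2) = W_{\leq n}(L_1) \cup W_{\leq n}(L_2)$ hold directly from the definition of $W_{\leq n}$. Hence $J_n(L_1, L_2)$ is exactly the classical Jaccard distance between the finite sets $W_{\leq n}(L_1)$ and $W_{\leq n}(L_2)$. Since the classical Jaccard distance is a well-known metric on finite sets, $J_n$ inherits non-negativity, symmetry, and the triangle inequality, and it vanishes whenever $L_1 = L_2$; the possible failure of the converse is precisely what forces $J_n$ to be only a pseudo-metric.

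For part 1, I must guarantee a distinguishing word of length at most $n$ for every pair $L_i \neq L_j$ in $S$. I would take the minimal DFAs for $L_i$ and $L_j$, complete the transition functions by adjoining at most one dead state to each (yielding automata with at most $s(L_i)+1$ and $s(L_j)+1$ states), and then apply the standard product construction. With the accepting set chosen to consist of pairs in which exactly one coordinate is accepting, the resulting DFA has at most $(s(L_i)+1)(s(L_j)+1)$ states and recognizes $L_i \triangle L_j$. Because $L_i \neq L_j$, this language is nonempty, so a breadth-first search from the initial state of the product reaches some accepting state via a path of length at most $(s(L_i)+1)(s(L_j)+1) - 1$. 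Taking $n$ to be the maximum of these bounds over all pairs $i,j$ ensures $W_{\leq n}(L_i \triangle L_j) \neq \emptyset$, which by the identities above forces $J_n(L_i, L_j) > 0$, so $J_n$ restricted to $S$ is a metric.

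For part 2, it suffices to exhibit distinct $L, L'$ with $W_{\leq n}(L) = W_{\leq n}(L')$, for which $L = \emptyset$ and $L' = \{a^{n+1}\}$ (any $a \in \Sigma$) works immediately, giving $J_n(L, L') = 0$. The only nontrivial ingredient in the whole argument is the triangle inequality, and there I rely on the classical result that finite-set Jaccard distance is a metric; the rest is routine set manipulation and a standard product-automaton argument, so I do not expect any serious obstacle.
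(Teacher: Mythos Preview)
Your proposal is correct and follows essentially the same approach as the paper: both reduce the pseudo-metric claim to the classical Jaccard metric on the finite sets $W_{\leq n}(\cdot)$, both obtain the bound in part~(1) via the product construction on completed minimal DFAs (the paper states the inequality $s(L_i\triangle L_j)\le (s(L_i)+1)(s(L_j)+1)-1$ without spelling out the construction you made explicit), and both handle part~(2) by exhibiting languages that agree up to length $n$ but differ beyond. Your example for~(2), $L=\emptyset$ and $L'=\{a^{n+1}\}$, is a bit cleaner than the paper's, which enlarges the alphabet by a fresh symbol $z$ and adjoins $z^{n+1}$ to an arbitrary $L$; both exploit the $0/0$ convention in the definition of $J_n$.
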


\begin{proof}
The fact that $J_n$ is a pseudo-metric follows from the fact that the standard Jaccard distance for finite sets is a metric.

(1) Let $S= \{L_1, \ldots, L_k\}$ be a fixed finite set of regular languages. For each $i \neq j$ there exists an $n_{i,j}$ such that $\left|W_{n_{i,j}}(L_i \triangle L_j)\right| \neq 0$ since $L_i \neq L_j$ and only one $L_i$ can
be $\emptyset$.
Let $n = \max_{i,j} n_{i,j}$.  Then $J_n$ is a metric over $S$.
Every regular language $L_i$ contains a word whose length is at most $s(L_i)$.
Now, we simply observe that $s(L_i \triangle L_j) \leq (s(L_i) + 1)(s(L_j) + 1) - 1$.

(2) Let $n$ be an arbitrary number and let $\Sigma' = \Sigma\cup \{z\}$,
where $z\notin \Sigma$. Take an arbitrary regular language $L$ over $\Sigma$.
Construct a regular language $L' = L \cup \{z^{n + 1}\}$ over $\Sigma'$. $L'$ is the language $L$ with the addition of the element $z^{n+1}$. When $L$ is
considered over alphabet $\Sigma'$, we have: $J_n(L, L') = 0$.
\end{proof}

For any pseudo-metric, the relation $d(x,y) = 0$ is an equivalence relation.
Thus, if we mod out by this equivalence relation, the pseudo-metric becomes a metric.

Due to the fact that one must choose a fixed $n$,
$J_n$ and $J_n'$ cannot account for the infinite nature of
regular languages. Limits based on $J_n$ and $J_n'$ are a natural next step.  However, the natural limits involving $J_n'$ and $J_n$ do not always exist.
An example showing this was given for $J_n'$ in the beginning of the 
introduction (Section~\ref{sec:intro}). A similar example applies to $J_n$.
Consider the languages given by $L_1=(a|b)^*$ and $L_2=((a|b)^2)^*$
($\Sigma=\{a,b\}$).
For these languages, $\lim_{n \rightarrow \infty} J_{2n}(L_1, L_2) = 2/3$ and $\lim_{n \rightarrow \infty} J_{2n+1}(L_1, L_2) = 1/3$.  Hence, $\lim_{n\rightarrow\infty} J_n(L_1,L_2)$ does not exist.

\begin{figure}
    \centering
    \begin{minipage}{0.45\linewidth}
        \includegraphics[width=\linewidth]{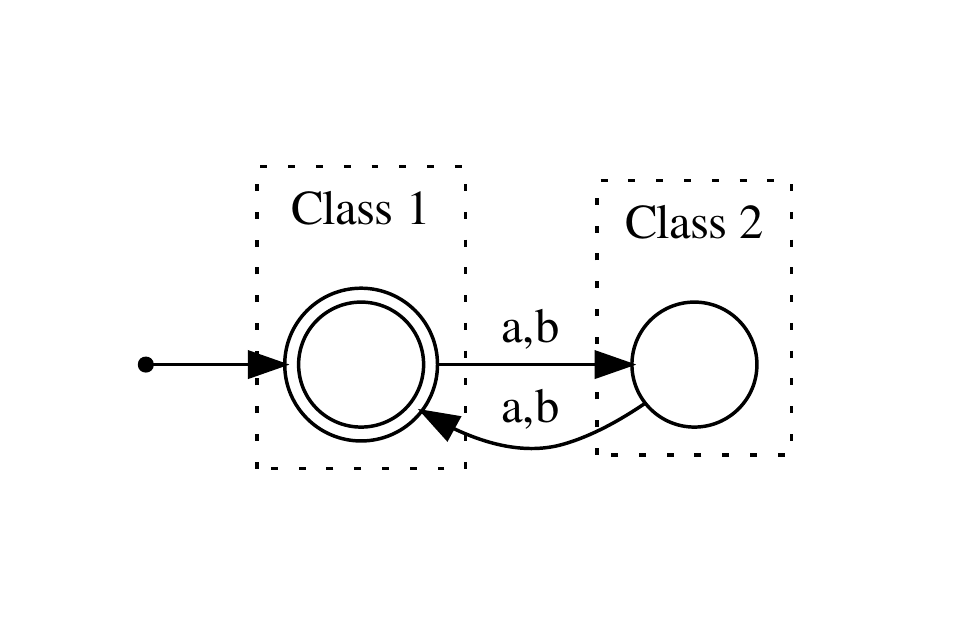}
    \end{minipage}
    \begin{minipage}{0.45\linewidth}
        \[
      \left(\begin{array}{cc} 0 & 2 \\ 2 & 0 \end{array}\right)^i
        = \left\{
            \begin{array}{cc}
            \left(\begin{array}{cc} 0 & 2^i \\ 2^i & 0 \end{array}\right)
                                        & \textrm{if i odd}\\
            \left(\begin{array}{cc} 2^i & 0 \\ 0 & 2^i \end{array}\right)
                                        & \textrm{if i even}
            \end{array}
           \right. %\}
       \]
    \end{minipage}\\
    \caption{
        \label{fig:period2}
        The DFA for the period 2 language $((a|b)^2)^*$ and the
        associated adjacency matrix raised to the $i^{\rm th}$ power.
    }
\end{figure}

The next theorem gives conditions for when the limit of $J_n'$ exists as $n$
goes to infinity.  Before the theorem is stated we will need some more
terminology.  Suppose $L$ is a regular language and $M$ is the corresponding
DFA.  This DFA is a labeled directed graph.  An \textit{irreducible component}
of $M$ is a strongly connected component of the graph.  That is, an irreducible
component is composed of a set of vertices such that for any pair, there is a
directed path between them.  Given an irreducible graph (or associated adjacency
matrix), we can define the period of the graph (or matrix).  The graph has
\textit{period} $p$ if the vertices can be grouped into classes that move
cyclically together with period $p$.  If $p=1$, the graph (or matrix) will be
called \textit{aperiodic}. See Figure~\ref{fig:period2} for an example of a regular
language whose DFA has period $2$.
Formally, an irreducible graph is {\it aperiodic} if
there is an $n$ such that all entries of the adjacency matrix $A$ raised to the
$n$-th power are positive \cite{marklov2015dynamical}. This is born out in
Figure~\ref{fig:period2}, all powers of that matrix contain at least one zero.
Note that matrices that
are irreducible and aperiodic are called \textit{primitive}.

\begin{theorem}\label{convergence_Jaccard}
Suppose $L_1$ and $L_2$ are regular languages.  If each irreducible component of the DFA associated to $L_1\triangle L_2$ and $L_1\cup L_2$ are aperiodic, then $\lim_{n\rightarrow \infty} J_n'(L_1,L_2)$ converges.
\end{theorem}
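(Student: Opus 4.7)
The plan is to reduce the convergence question to the classical Perron--Frobenius asymptotic for nonnegative matrices. Let $M_\triangle$ and $M_\cup$ denote DFAs for $L_1\triangle L_2$ and $L_1\cup L_2$, built from the DFAs for $L_1$ and $L_2$ by the standard product construction. For any DFA with adjacency matrix $A$, initial state $q_0$, and final set $F$, we have $|W_n(L)|=\sum_{q\in F}(A^n)_{q_0,q}$, so it suffices to describe the asymptotic behavior of the entries of $A^n$.

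Permute states so that $A$ is block upper-triangular with diagonal blocks $B_1,\ldots,B_r$ corresponding to the SCCs. By hypothesis every nontrivial $B_i$ is primitive, so Perron--Frobenius gives $B_i^n=\lambda_i^n(v_iu_i^\top+O(\rho_i^n))$ for some $\rho_i<1$ and positive Perron eigenvectors $v_i,u_i$. Expanding $(A^n)_{q_0,q}$ by the block structure rewrites it as a sum over SCC-chains $C_0\to C_1\to\cdots\to C_k$ connecting $q_0$ to $q$, each term being a convolution $\sum_{n_0+\cdots+n_k=n-k}\prod_j B_{i_j}^{n_j}$ sandwiched by the vectors encoding the inter-SCC transition edges. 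Standard convolution asymptotics then yield $|W_n(L)|=c\,n^{d}\lambda^{n}+o(n^{d}\lambda^{n})$, where $\lambda$ is the largest Perron eigenvalue appearing on any SCC-chain from $q_0$ to a final state, and $d+1$ is the maximum number of SCCs with eigenvalue $\lambda$ occurring along such a chain.

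Applying this to both $M_\triangle$ and $M_\cup$ gives $|W_n(L_1\triangle L_2)|\sim\alpha\,n^{d_1}\lambda_1^{n}$ and $|W_n(L_1\cup L_2)|\sim\beta\,n^{d_2}\lambda_2^{n}$ with $\alpha,\beta>0$ (the degenerate cases where one of the languages is empty or finite are handled directly from the definition of $J_n'$). The inclusion $L_1\triangle L_2\subseteq L_1\cup L_2$ forces $|W_n(L_1\triangle L_2)|\leq|W_n(L_1\cup L_2)|$ for every $n$, which is only consistent with either $\lambda_1<\lambda_2$, or $\lambda_1=\lambda_2$ together with $d_1\leq d_2$. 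Hence $J_n'(L_1,L_2)$ converges: to $0$ when the numerator is of strictly smaller order, and to $\alpha/\beta$ when the two leading terms match.

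The hard part will be making the block-level asymptotic fully rigorous. Perron's theorem for a single primitive block is textbook, but the polynomial factor $n^d$ arises from repeated visits, via inter-SCC edges, to several components sharing the maximal Perron eigenvalue, and one must check that the contributions of different SCC-chains combine coherently, so that the leading coefficient $c$ is strictly positive whenever an optimal chain exists and no unexpected cancellations occur. The aperiodicity hypothesis is precisely what removes the oscillatory roots-of-unity contributions that each block $B_i^{n}$ would otherwise carry; without it, the $(aa)^{*}$ versus $a^{*}$ phenomenon from the introduction returns and the limit fails to exist.
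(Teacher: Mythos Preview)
Your approach is correct and arrives at the same conclusion as the paper, but by a different route. The paper does not carry out the block-triangular/SCC-chain convolution analysis you propose; instead it invokes Rothblum's 1981 theorem, which states that for any nonnegative matrix $A$ with spectral radius $\lambda$ there exist a period $q$ and matrix-coefficient polynomials $S_0,\ldots,S_{q-1}$ with $(A/\lambda)^{qn+k}-S_k(qn+k)\to 0$. Under the aperiodicity hypothesis on every irreducible component one has $q=1$, so $|W_n(L)|=\lambda^n\bigl(i_A S_0(n) f_A+o(1)\bigr)$ directly, and the ratio $J_n'$ is then a ratio of such expressions. Your SCC-chain argument is essentially a hands-on proof of Rothblum's theorem in the $q=1$ case: the polynomial $S_0$ is exactly what the convolutions over chains of primitive blocks produce, and the degree $d$ you identify matches the degree of $S_0'(x)=i_A S_0(x)f_A$.

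What each buys: the paper's citation is shorter and sidesteps the bookkeeping you flag in your last paragraph, but it imports a nontrivial result as a black box. Your route is self-contained and gives more structural information (the interpretation of $\lambda$ and $d$ in terms of optimal SCC-chains from $q_0$ to $F$), at the cost of having to verify the convolution asymptotics carefully. Your worry about ``unexpected cancellations'' in the leading coefficient is unfounded, though: all terms in the chain-sum are nonnegative, and once the DFA is trimmed there is at least one chain realizing the maximal $(\lambda,d)$, so $c>0$ automatically. The paper addresses the analogous nonvanishing issue later (Lemma~\ref{one good one}) by appealing to Theorem~\ref{language_entropy}, but in the aperiodic case positivity is immediate from the combinatorics.
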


To build intuition for this theorem, as well as how we will frame the question of convergence in the next subsection, we will first discuss Theorem \ref{convergence_Jaccard} in the case where the DFA associated to $L_1\triangle L_2$ and $L_1\cup L_2$ are irreducible and aperiodic, i.e. primitive.  Suppose $A$ and $B$ are the adjacency matrices for $L_1\triangle L_2$ and $L_1\cup L_2$ respectively.  Perron-Frobenius theory tells us that the eigenvalue of largest modulus of a primitive matrix is real and unique.  Let $(v,\lambda)$ and $(x,\rho)$ be eigenpairs composed of the top eigenvalues for $A$ and $B$ respectively.  Notice that $i_A A^n f_A$, where $i_A$ is the vector whose $j$th entry is $1$ if $j$ is an initial state in $A$ and $0$ otherwise (a similar definition for final states defining $f_A$ holds), represents words in $L_1\triangle L_2$ of length $n$.  If we write $f_A=c_1v+c_2w$ and $f_B=d_1x+d_2y$, then $i_A A^n f_A$ converges to $\lambda^nc_1i_A v$, and $i_B B^n f_B$ converges to $\rho^nd_1i_B x$ as $n$ goes to infinity.  This convergence is guaranteed because $\lambda$ and $\rho$ are unique top eigenvalues.  Thus, 
\[\lim_{n\rightarrow\infty} J_n'(L_1,L_2) = \lim_{n\rightarrow\infty} \left(\frac{\lambda}{\rho}\right)^n \frac{c_1i_Av}{d_1i_Bx}\]
and the limit converges ($\lambda\leq\rho$ because $L_1\triangle L_2\subseteq L_1\cup L_2$). 

The general case  of Theorem \ref{convergence_Jaccard},
which does not assume $L_1\triangle L_2$ and $L_1\cup L_2$ have irreducible
matrices, is more complicated.
To give a quick overview:
recall that $|W_n(L_1 \triangle L_2)|$ and $|W_n(L_1 \cup L_2)|$ can be calculated using powers of specific matrices.
An understanding of the asymptotic behavior of $A^n$ for large $n$ was finally beginning to be developed several decades after Chomsky and Miller defined regular languages.
In 1981 Rothblum \cite{Rothblum_expansion_of_sums} proved that for each non-negative matrix $A$ with largest eigenvalue $\lambda$, there exists $q \geq 1$ (called the \emph{period} of $A$) and polynomials $S_0(x), S_1(x), \ldots, S_{q-1}(x)$ (whose domain is the set of real numbers and whose coefficients are matrices) such that for all whole numbers $0 \leq k \leq q-1$ we have that $\lim_{n \rightarrow \infty} \left(A/\lambda\right)^{qn+k} - S_k(qn+k) = 0$. Since $q=1$ in the case we are interested in (i.e. Theorem \ref{convergence_Jaccard}), $\lim_{n\rightarrow \infty} J_n'(L_1,L_2)$ converges.

%%%%%%%%%%%%%%%%%%%%%%%%%%%%%%%%%%%%%%%%%%%%%%%%%%%%%%%%%%%%%%%%%%%%%%%%%%%%%%%%%%%%%%%%%%%%%%%%%%%%%%%%%%%%%%%%

\subsection{Ces\`aro Jaccard}\label{C_Jaccard}

For a sequence of numbers $a_1, a_2, \ldots$, a Ces\`{a}ro summation is $\lim_{n \rightarrow \infty} \frac{1}{n} \sum_{i=1}^n a_i$ when the limit exists.
The intuition behind a Ces\`{a}ro summation is that it may give the ``average value'' of the limit of the sequence, even when the sequence does not converge.
For example, the sequence $a_j = e^{\alpha i j}$ has Ces\`{a}ro summation $0$ for all real numbers $\alpha \neq 0$.  This follows from the fact that rotations of the circle are uniquely ergodic \cite{Hasselblatt_Katok}.
Not all sequences have a Ces\`{a}ro summation, even when we restrict our attention to sequences whose values lie in $[0,1]$.
For example, the sequence $b_i$, where $b_i = 1$ when $2^{2n} < i < 2^{2n+1}$ for some $n \in \mathbb{N}$ and $b_i = 0$ otherwise has no Ces\`{a}ro summation.
However, we will be able to show that the Ces\`aro average of Jaccard distances
does exist.

To that end, another limit based distance is the Ces\`aro average of the $J_n$
or $J_n'$. 

\begin{definition}[Ces\`aro Jaccard Distance]
Suppose $L_1$ and $L_2$ are regular languages.  Define the \textit{Ces\`aro Jaccard distance} by $J_C(L_1,L_2)=\lim_{n\rightarrow\infty} \frac{1}{n}\sum_{i=1}^n J_i(L_1,L_2).$
\end{definition}

\begin{fact}
The Ces\`aro Jaccard distance inherits the pseudo-metric property from $J_n$.
\end{fact}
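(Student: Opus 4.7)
The plan is to derive each pseudo-metric axiom for $J_C$ directly from the corresponding property of $J_n$ established in Theorem~\ref{jaccard pseudo-metric}, by averaging over $i = 1, \ldots, n$ and passing to the limit. Because Theorem~\ref{J_C_defined} (referenced earlier) already guarantees that the defining Ces\`aro limit exists for every pair of regular languages, no convergence argument is needed here; I only need to verify the defining properties.

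Non-negativity, symmetry, and $J_C(L, L) = 0$ are essentially immediate from the pointwise behavior of $J_n$. Each $J_i$ takes values in $[0, 1]$, so every partial Ces\`aro average does as well, and hence so does the limit, giving $J_C \geq 0$. Symmetry $J_i(L_1, L_2) = J_i(L_2, L_1)$ is preserved under both finite averaging and passage to the limit. Finally, for $L_1 = L_2 = L$ one has $L \triangle L = \emptyset$, hence $J_i(L, L) = 0$ for every $i$, and thus $J_C(L, L) = 0$.

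The only axiom requiring even a moment's care is the triangle inequality. For each fixed $i$, Theorem~\ref{jaccard pseudo-metric} gives $J_i(L_1, L_3) \leq J_i(L_1, L_2) + J_i(L_2, L_3)$. Averaging this inequality over $i = 1, \ldots, n$, dividing by $n$, and letting $n \to \infty$ produces $J_C(L_1, L_3) \leq J_C(L_1, L_2) + J_C(L_2, L_3)$. The nearest thing to an obstacle is that this final step splits the limit of a sum into the sum of two limits, which is legal only because Theorem~\ref{J_C_defined} supplies the existence of each of the three Ces\`aro limits $J_C(L_i, L_j)$ individually; once that existence is in hand, the whole argument is routine.
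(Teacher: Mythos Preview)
Your argument is correct and is exactly the routine verification the paper has in mind: the paper offers no proof beyond the one-line assertion that $J_C$ ``inherits the pseudo-metric property from $J_n$,'' and your expansion---averaging the pointwise pseudo-metric inequalities for $J_i$ and passing to the limit, with Theorem~\ref{J_C_defined} guaranteeing each limit exists---is precisely how one unpacks that phrase.
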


The Ces\`aro Jaccard distance is theoretically better than the
above suggestions in Section \ref{finite_Jaccard} since it can be shown to exist for all regular languages.
\begin{theorem}\label{J_C_defined}
Let $L_1$ and $L_2$ be two regular languages.  Then, $J_C(L_1,L_2)$ is well-defined.  That is, $\lim_{n\rightarrow\infty} \frac{1}{n}\sum_{i=1}^n J_i(L_1,L_2)$
exists.
\end{theorem}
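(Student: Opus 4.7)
\medskip
\noindent\textbf{Proof plan.} The plan is to show that the sequence $(J_n(L_1,L_2))_{n\geq 1}$ is asymptotically periodic (i.e.\ for some integer $q\geq 1$ and constants $c_0,\ldots,c_{q-1}\in[0,1]$ we have $J_{qn+k}\to c_k$ as $n\to\infty$ for each $k$). From this the Ces\`aro limit exists automatically, namely $\tfrac{1}{N}\sum_{i=1}^N J_i\to \tfrac{1}{q}\sum_{k=0}^{q-1}c_k$, by splitting the sum over residues mod $q$ and absorbing the $o(1)$ error into a vanishing Ces\`aro tail. So the entire task reduces to proving the residue-wise convergence of $J_n$.

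\medskip
\noindent The first step is algebraic. Since regular languages are closed under Boolean operations, both $L_1\triangle L_2$ and $L_1\cup L_2$ are regular; let $A$ and $B$ be the adjacency matrices of their DFAs and let $\lambda,\mu$ be their respective spectral radii. Standard computations give $|W_n(L_1\triangle L_2)|=v_A^{T}A^{n}w_A$ and $|W_n(L_1\cup L_2)|=v_B^{T}B^{n}w_B$ for appropriate 0/1 vectors, and the cumulative quantities $|W_{\leq n}(\cdot)|$ are the corresponding partial sums. I would then apply Rothblum's expansion (already invoked just above the theorem in the paper): there exist periods $q_A,q_B$ and matrix-valued polynomials $S^A_k,S^B_k$ such that $(A/\lambda)^{q_An+k}-S^A_k(q_An+k)\to 0$, and likewise for $B$. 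Plugging in the vectors $v,w$ shows that
$|W_n(L_1\triangle L_2)|=\lambda^n \varphi_A(n)+o(\lambda^n n^{d_A})$ and $|W_n(L_1\cup L_2)|=\mu^n \varphi_B(n)+o(\mu^n n^{d_B})$,
where $\varphi_A,\varphi_B$ are each of the form (polynomial in $n$)$\times$(periodic in $n$). Summing these formulas using the closed-form for $\sum_{i=0}^n \rho^i P(i)$ (geometric series with polynomial modulation) shows that the cumulative counts $|W_{\leq n}(\cdot)|$ have the same form, with the same leading exponential bases $\lambda,\mu$ and with polynomial-periodic envelopes $\Phi_A,\Phi_B$.

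\medskip
\noindent The next step is a case split. Since $L_1\triangle L_2\subseteq L_1\cup L_2$, we have $\lambda\leq\mu$. If $\lambda<\mu$, then $J_n=O((\lambda/\mu)^n\cdot\mathrm{poly}(n))\to 0$, so every $c_k=0$ and we are done. If $\lambda=\mu$, the leading exponentials cancel and the ratio $J_n$ equals $\Phi_A(n)/\Phi_B(n)$ up to a $(1+o(1))$ factor. Because $J_n\in[0,1]$ is uniformly bounded, the polynomial degree appearing in $\Phi_A$ cannot exceed that in $\Phi_B$, so after dividing numerator and denominator by the common dominant power $n^{d}\mu^n$ we are left with a ratio of bounded periodic functions (plus vanishing error). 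This is eventually periodic in $n$ with period $q^{\ast}:=\operatorname{lcm}(q_A,q_B)$, giving the desired residue-wise limits.

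\medskip
\noindent The main obstacle is the degenerate subcase in which the leading envelope $\Phi_B$ vanishes on some residue class $k\bmod q^{\ast}$, making the denominator of strictly smaller order there. Because $|W_{\leq n}(L_1\triangle L_2)|\leq |W_{\leq n}(L_1\cup L_2)|$, the corresponding value $\Phi_A(k)$ must vanish as well, so the ratio becomes $0/0$ at the top order and one has to descend to the next term in the asymptotic expansion. To handle this uniformly I would work with the product DFA $M_1\times M_2$, so that $|W_n(L_1\triangle L_2)|$, $|W_n(L_1\cup L_2)|$, and $|W_n(L_1\cap L_2)|$ are all realised as linear combinations of entries of powers of the \emph{same} matrix, hence share a common set of eigenvalues. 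The expansion $\sum_j p_j(n)\alpha_j^n$ then has the same $\alpha_j$'s in both numerator and denominator, and one can iterate the argument on the subdominant eigenvalue strata: at each stratum either both top coefficients survive (in which case the residue-class ratio converges) or both vanish (in which case we move to the next stratum). This iterative descent, combined with the boundedness of $J_n$, terminates after finitely many steps and delivers $J_{q^{\ast}n+k}\to c_k$ for every $k$, completing the proof.
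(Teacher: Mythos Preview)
Your overall strategy---show that $J_{qn+k}\to c_k$ along each residue class modulo some period and then average---is exactly the paper's, as is the algebraic setup via Rothblum's expansion and the case split $\lambda<\mu$ versus $\lambda=\mu$.

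Where you diverge is in what you call the ``main obstacle'': the possibility that the leading envelope $\Phi_B$ of the denominator $|W_{\leq n}(L_1\cup L_2)|$ vanishes on some residue class. The paper's central point is that for \emph{cumulative} counts this cannot happen. Theorem~\ref{sum of matrices} (Part~II) shows that when $\lambda>1$ the leading coefficient of $\sum_{i=1}^{qn+k} u A^i v$ on residue class $k$ equals $\frac{1}{\lambda^k(1-\lambda^{-q})}\sum_{\ell=k-q+1}^k \lambda^\ell s_\ell'$, a strictly positive linear combination of the $s_\ell'$ ranging over \emph{all} residues $\ell$ (and similarly when $\lambda=1$). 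Since each $s_\ell'\geq 0$, and Lemma~\ref{one good one} (proved by appealing back to Theorem~\ref{language_entropy}) guarantees that at least one $s_{\ell'}'>0$ whenever the DFA is trim, this coefficient is strictly positive for every $k$. In other words, passing from $W_n$ to $W_{\leq n}$ is precisely what removes the degeneracy you are trying to work around; this is the structural reason the paper defines $J_C$ via $J_n$ rather than $J_n'$.

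Your product-DFA and iterative-descent machinery is therefore unnecessary for $J_C$. Something of that flavor \emph{is} needed for $J_C'$, and the paper sketches it separately in the Remark following Lemma~\ref{one good one}. As written, your descent is not yet a proof: you would have to make precise what the ``subdominant strata'' are for the cumulative sums, check that numerator and denominator strata align after passing to the product DFA, and verify that boundedness of $J_n$ forces simultaneous vanishing at every level before the terminal one. This can probably be made rigorous, but the paper's route via Lemma~\ref{one good one} sidesteps the issue entirely.
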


Taking a Ces\`aro average also has the ability to smooth out $J_n'$: it exists when $J_n'$ is used in place of $J_n$.
However, the following example motivates our use of $J_n$ in the definition of Ces\`aro Jaccard. 

\begin{example}\label{J_C_example}
Let $L_1=((a|b|c)^2)^*|(d|e)^*$ and $L_2=((a|b|c)^2)^*|(f|g)^*$.
We have that 
$\lim_{n \rightarrow \infty} J_{2n}(L_1, L_2) = 0$
and
$\lim_{n \rightarrow \infty} J_{2n}'(L_1, L_2) = 0$.
Furthermore,
$\lim_{n \rightarrow \infty} J_{2n+1}(L_1, L_2) = 0$ and
$\lim_{n \rightarrow \infty} J_{2n+1}'(L_1, L_2) = 1$.
Thus, $J_C(L_1, L_2) = 0$, but it would equal
$1/2$ if $J_n$ was replaced with
$J_n'$. Because the overwhelming majority of $L_1$ and $L_2$ are
in $L_1\cap L_2$, $0$ is a better value for $J_C(L_1, L_2)$ than $\frac 1 2$,
so we prefer $J_n$ over $J_n'$ in the definition. 
\end{example}

A full proof of Theorem \ref{J_C_defined} will wait until Section \ref{proof of J_C}, as it uses ideas from Section \ref{sec:entropy}.
We present here a result that is the core of the proof (the case when $\lambda = 1$ is similar to a result in \cite{Rothblum_expansion_of_sums}).

\begin{theorem} \label{sum of matrices}
(Part I) Let $A$ be the adjacency matrix for a DFA representing a regular language $L$, and let $\lambda$ be the largest eigenvalue of $A$.
Let $q$ and $S_0(x), S_1(x), \ldots, S_{q-1}(x)$ be as in Rothblum's theorem (indices will be taken modulo $q$).
Let $d$ be the largest degree of the polynomials $S_0(x), S_1(x), \ldots, S_{q-1}(x)$, and let $s_\ell$ be the coefficient of $x^d$ in $S_\ell(x)$.
\begin{enumerate}
	\item If $\lambda < 1$, then $L$ is finite.
	\item If $\lambda = 1$, then $\lim_{n \rightarrow \infty} \frac{1}{n^{d+1}} \sum_{i=1}^n A^i = \frac1{q(d+1)}\sum_{i=0}^{q-1}s_\ell$.
	\item If $\lambda > 1$, then $\lim_{n \rightarrow \infty} \frac{1}{(qn+k)^{d}}\lambda^{-(qn+k)} \sum_{i=1}^{qn+k} A^i = \frac1{\lambda^k(1-\lambda^{-q})}\sum_{\ell = k-q+1}^k \lambda^\ell s_\ell$.
\end{enumerate}

(Part II) Let $u$ be a row vector, $v$ a column vector, and let $S_j'(x) = u S_j(x) v$, so that $S_j'(x)$ is a polynomial with real numbers as coefficients.
Let $d'$ be the largest degree of the polynomials $S_0'(x), S_1'(x), \ldots, S_{q-1}'(x)$, and let $s_\ell'$ be the coefficient of $x^{d'}$ in $S_\ell'(x)$.
\begin{enumerate}
	\item If $\lambda = 1$, then $\lim_{n \rightarrow \infty} \frac{1}{n^{d'+1}} \sum_{i=1}^n u A^i v = \frac1{q(d'+1)}\sum_{i=0}^{q-1}s_\ell'$.
	\item If $\lambda > 1$, then $\lim_{n \rightarrow \infty} \frac{1}{(qn+k)^{d'}}\lambda^{-(qn+k)} \sum_{i=1}^{qn+k} u A^i v = \frac1{\lambda^k(1-\lambda^{-q})}\sum_{\ell = k-q+1}^k \lambda^\ell s_\ell '$.
\end{enumerate}

\end{theorem}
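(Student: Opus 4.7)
The plan is to rewrite Rothblum's expansion as $A^i = \lambda^i S_{i \bmod q}(i) + \eta_i$ with $\|\eta_i\| = o(\lambda^i)$, group the sum $\sum_{i=1}^N A^i$ by the residue class of $i$ modulo $q$, and reduce to the scalar asymptotics of $\sum_{i \equiv \ell} \lambda^i i^d$ on each residue class. The case $\lambda < 1$ is immediate: $\|A^n\| \to 0$ together with the fact that entries of $A^n$ are non-negative integers (they count walks in the DFA) forces $A^n = 0$ for all large $n$, and hence $L$ is finite. The case $\lambda = 1$ is essentially Rothblum's own Ces\`aro-style argument: $\sum_{i=1}^N \eta_i = o(N)$, and the leading contribution is $\sum_\ell s_\ell \sum_{i \leq N,\, i \equiv \ell} i^d \sim \sum_\ell s_\ell \cdot N^{d+1}/(q(d+1))$ via the standard polynomial asymptotic $\sum_{j \leq N/q}(qj)^d \sim N^{d+1}/(q(d+1))$, with the lower-order terms of each $S_\ell$ contributing only $o(N^{d+1})$.

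The main work is in the case $\lambda > 1$, where the sum is dominated by its last few terms. For each residue $\ell \in \{0,\ldots,q-1\}$, let $i_\ell^{\max}$ denote the largest $i \leq N := qn+k$ with $i \equiv \ell \pmod q$, that is, $qn+\ell$ when $\ell \leq k$ and $qn+\ell-q$ when $\ell > k$. Reversing the geometric-polynomial sum on each residue class yields $\sum_{i \leq N,\, i \equiv \ell} \lambda^i i^d \sim \lambda^{i_\ell^{\max}}(i_\ell^{\max})^d/(1-\lambda^{-q})$, because consecutive terms read from the end shrink by a factor tending to $\lambda^{-q} < 1$ and the reversed tail converges to $1/(1-\lambda^{-q})$. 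Normalising by $\lambda^N N^d$ and using $(i_\ell^{\max}/N)^d \to 1$, the limit becomes $\frac{1}{1-\lambda^{-q}}\sum_\ell \lambda^{i_\ell^{\max}-N} s_\ell$, whose exponents are $\ell-k$ for $\ell \leq k$ and $\ell-k-q$ for $\ell > k$. Reindexing the latter group by $\ell \mapsto \ell-q$ and using the mod-$q$ convention on $s_\ell$ consolidates the two groups into the single sum $\sum_{\ell=k-q+1}^k \lambda^{\ell-k} s_\ell$, which matches the statement after pulling out $\lambda^{-k}$. The error $\sum \eta_i$ is handled by choosing $i_0$ with $\|\eta_i\| \leq \varepsilon \lambda^i$ for $i \geq i_0$; its contribution to the normalised sum is of order $\varepsilon/((\lambda-1) N^d)$, which vanishes because $\varepsilon$ is arbitrary.

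For Part II, $u(\cdot)v$ is a linear functional, so $u A^i v = \lambda^i S_{i \bmod q}'(i) + u\eta_i v$ with the same error bound. Every step of Part I then transfers verbatim with the pair $(d, s_\ell)$ replaced by $(d', s_\ell')$. The only subtlety worth flagging is that projection by $u$ and $v$ can strictly decrease the degree (it is possible that $u s_\ell v = 0$ even when $s_\ell \neq 0$), which is precisely why the statement introduces $d'$ and $s_\ell'$ independently. The main technical obstacle throughout is the bookkeeping in the $\lambda > 1$ case: correctly identifying $i_\ell^{\max}$ per residue, and then verifying that the split into $\ell \leq k$ and $\ell > k$ cleanly reassembles into $\sum_{\ell=k-q+1}^k \lambda^\ell s_\ell$ under the mod-$q$ reading of the subscripts.
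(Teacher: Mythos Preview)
Your proposal is correct and follows essentially the same approach as the paper: Rothblum's expansion, grouping by residue class modulo $q$, power-sum/integral asymptotics for $\lambda = 1$, and geometric tail dominance for $\lambda > 1$. Your organization is slightly more streamlined---you bound the error $\sum_i \eta_i$ directly via $\|\eta_i\| \leq \varepsilon\lambda^i$ rather than introducing the paper's explicit cutoff $N_* \approx N - q\log^2 n$, and your $\lambda < 1$ argument (spectral radius $<1$ forces $A^n \to 0$, integer entries then force $A^n = 0$) differs cosmetically from the paper's Perron--Frobenius observation that non-negative integer matrices have $\lambda \in \{0\} \cup [1,\infty)$---but the substance is identical.
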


\begin{proof} 
We omit the proof of Part II, as it will be clear from the proof of Part I.

See \cite{Rothblum_chapter} for a background on linear algebra.
The largest eigenvalue of a non-negative matrix is at least the value of the smallest sum of the entries in a row of an irreducible component.
Because the adjacency matrix for a DFA has integer entries, this implies that either $\lambda = 0$ or $\lambda \geq 1$.
If $\lambda = 0$, then $A$ is nilpotent (in other words, there exists an $n'$ such that $A^{n'} = 0$), which means $L$ is finite.
So assume $\lambda \geq 1$.

First consider the case when $\lambda = 1$.
%For all $p>0$, we have that $\sum_{i=1}^n x^p - \int_{1}^n x^p dx \leq O\left(n^p\right)$, so $s_\ell$ is well-defined (it should be clear that $s_\ell$ is well-defined if $d=0$).
%Moreover, it quickly follows from Rothblum's theorem that $\lim_{n \rightarrow \infty} \frac{1}{n^{d+1}} \sum_{i=1}^n A^i = \sum_{i=0}^{q-1}s_\ell$.
First, we break the sum up as $\sum_{i=1}^{qn+k} A^i = \sum_{i = k-q+1}^k \sum_{j=1}^n A^{qj + i} + O(1)$.
Each part $\sum_{j=1}^n A^{qj + i} = \sum_{j=1}^n (o(1) + S_i(qj + i))$ will be approximated as $\int_{1}^n S_i(qx + i) dx$, which will be sufficiently accurate because $\sum_{x=1}^n x^b - \int_{1}^n x^bdx \leq O\left(n^b\right)$.
The result follows from $\lim_{n \rightarrow \infty} (qn + i)^{-(d+1)} \int_i^n S_i (qx + i) dx = \frac{s_\ell}{q(d+1)}$. 

Finally, suppose that $\lambda > 1$. 
Let $\epsilon > 0$ be an arbitrary number.
Let $N = qn + k$, $N_* \approx q(n - \log^2(n))$ such that $N_* \equiv k(mod\ q)$, and $n_* = (N_* - k)/q$.
For a matrix $M$, let $\|M\|_e$ denote the maximum magnitude among the entries of $M$.
Notice that the following terms converge to zero:
\begin{itemize}
	\item $\|\lambda^{-N}\sum_{i=1}^{N_*}A^i\|_e \leq O\left(n\lambda^{-\log^2(n)}\right) \leq O\left( n^{-1} \right)$.
	\item For all $\ell$ and $N' \geq N_*$, we have that $\|S_\ell(N') - S_\ell(N)\|_e / \|S_\ell(N)\|_e \leq O\left(\frac{\log^2(n)}{n}\right)$.
	\item For all $\ell$ and $n' \geq n_*$, Rothblum's theorem states that $\|\left(A/\lambda\right)^{qn'+\ell} - S_\ell(n')\|_e$ converges to $0$ exponentially.
\end{itemize}
Let $\delta > 0$ be a number such that $\delta\left(2q\frac{1}{1 - \lambda^{-q}} + 1\right) < \epsilon$.
Let $n$ be large enough such that each of the following terms is less than $\delta$:
\begin{itemize}
	\item $\|\lambda^{-N}\sum_{i=1}^{N_*}A^i\|_e$,
	\item $N^{-d}\|S_\ell(N') - S_\ell(N)\|_e$ for all $0 \leq \ell < q$, $N' > N_*$, and 
	\item $\|\left(A/\lambda\right)^{qn'+\ell} - S_\ell(qn'+\ell)\|_e$ for all $0 \leq \ell < q$, $n' > n_*$.
\end{itemize}
By the triangle inequality, for all $0 \leq \ell < q$, $n' > n_*$,  we have that $N^{-d}\|\left(A/\lambda\right)^{qn'+\ell} - S_\ell(N)\|_e < 2\delta$.
%In the following, let all indices of $S_\ell(x)$ be taken modulo $q$.
%Because $\sum_{i = N_* + 1}^{N} A^{i} = \sum_{\ell = k-q+1}^{k} \sum_{n'=n_*}^n A^{qn'+\ell}$, we have that 
%$$n^{-d}\left\|\sum_{\ell = k-q+1}^{k} \left(\sum_{n'=n_*}^n  A^{qn'+\ell} - \lambda^{qn'+\ell}S_\ell(n)\right)\right\| < q\sum_{n'=n_*}^{n}2 \delta\lambda^{q(n'+1)} $$
%and so 
%$$ \lambda^{-N} n^{-d} \left\| \sum_{i = N_* + 1}^{N} A^{i} - \sum_{\ell = k-q+1}^{k} \sum_{n'=n_*}^n \lambda^{qn'+\ell} S_\ell(n) \right\| < q\sum_{n'=n_*}^{n}2 \delta\lambda^{q(n' - n)} \leq 2q\delta \frac{1}{1 - \lambda^{-q}}.$$
Therefore $N^{-d}\|A^{qn'+\ell} - \lambda^{qn'+\ell}S_\ell(N)\|_e < 2\delta\lambda^{qn'+\ell}$
Adding this inequality up across all values of $0 \leq \ell < q$ and $n' > n_*$ (note that $\sum_{i = N_* + 1}^{N} A^{i} = \sum_{\ell = k-q+1}^{k} \sum_{n'=n_*}^n A^{qn'+\ell}$), we have that
$$ \lambda^{-N} N^{-d} \left\| \sum_{i = N_* + 1}^{N} A^{i} - \sum_{\ell = k-q+1}^{k} \sum_{n'=n_*}^n \lambda^{qn'+\ell} S_\ell(N) \right\| < q\sum_{n'=n_*}^{n}2 \delta\lambda^{q(n' - n)} \leq 2q\delta \frac{1}{1 - \lambda^{-q}}.$$

By the triangle inequality, we have that 
$$ \left\| \lambda^{-N} N^{-d}\sum_{i = 1}^{N} A^{i} - \sum_{n'=n_*}^n \sum_{\ell = k-q+1}^k \lambda^{qn' + \ell - N} N^{-d}S_\ell(N) \right\|$$
$$ \leq \lambda^{-N} N^{-d} \left\|  \sum_{i = N_*+1}^{N} A^{i} - \sum_{n'=n_*}^n \sum_{\ell = k-q+1}^k \lambda^{qn'+\ell} S_\ell(N) \right\| + \left\| N^{-d} \lambda^{-N} \sum_{i=1}^{N_*} A^i\right\|,$$
which is less than $\delta\left(2q\frac{1}{1 - \lambda^{-q}} + 1\right) < \epsilon$.
So if the limit of $\sum_{n'=n_*}^n \sum_{\ell = k-q+1}^k \lambda^{qn' + \ell - N} N^{-d}S_\ell(N)$ exists, then the limit of $\lambda^{-N} N^{-d}\sum_{i = 1}^{N} A^{i}$ exists and is equal.
Because as $N \rightarrow \infty$ we have that $n - n^* \approx \log^2(n) \rightarrow \infty$, we conclude that 
%\begin{eqnarray*}
%\lim_{n \rightarrow \infty} n^{-d} \lambda^{-N} \sum_{i = 1}^{N} A^{i} 	& = &	\lim_{n \rightarrow \infty} \sum_{n'=n_*}^n \sum_{\ell = k-q+1}^k \lambda^{qn' + \ell - N} n^{-d}S_\ell(n) \\	
%									& = & 	\sum_{\ell = k-q+1}^k \lambda^{\ell -k} \lim_{n \rightarrow \infty} n^{-d}S_\ell(n) \sum_{n'=n_*}^n \lambda^{q(n'-n)}  \\
%									& = & 	\sum_{\ell = k-q+1}^k \lambda^{\ell -k} \frac{t_\ell}{1 - \lambda^{-q}} 
%\end{eqnarray*}
%Note that we can separate the limits because of Rothblum's Theorem. 
\begin{eqnarray*}
\frac1{\lambda^k(1-\lambda^{-q})}\sum_{\ell = k-q+1}^k \lambda^\ell s_\ell 	& = &	\sum_{\ell = k-q+1}^k \left(\lim_{N \rightarrow \infty} N^{-d} S_\ell(N) \right) \lambda^{\ell - k} \left(\lim_{m \rightarrow \infty} \sum_{i=0}^m \lambda^{-qi} \right) \\
%										& = & 	\sum_{\ell = k-q+1}^k \lambda^{\ell - k} \lim_{N \rightarrow \infty} N^{-d} S_\ell(N) \sum_{n'=n_*}^n \lambda^{q(n'-n)} \\
										& = &	\lim_{N \rightarrow \infty} \sum_{\ell = k-q+1}^k N^{-d}S_\ell(N) \sum_{n'=n_*}^n \lambda^{qn' + \ell - N} \\
										& = &   \lim_{n \rightarrow \infty} \frac{1}{N^{d}}\lambda^{-N} \sum_{i=1}^{N} A^i .
\end{eqnarray*}
\end{proof}

%%%%%%%%%%%%%%%%%%%%%%%%%%%%%%%%%%%%%%%%%%%%%%%%%%%%%%%%%%%%%%%%%%%%%%%%%%%%%%%%%%%%%%%%%%%%%%%%%%%%%%%%%%%%%%%%%%

\section{Entropy}
\label{sec:entropy}

In this section we develop the idea of topological entropy for a certain type of dynamical system and show how it relates to a quantity that we have identified as the language entropy.  Then, we will show how Ces\'aro Jaccard is related to entropy.

%%%%%%%%%%%%%%%%%%%%%%%%%%%%%%%%%%%%%%%%%%%%%%%%%%%%%%%%%%%%%%%%%%%%%%%%%%%%%%%%%%%%%%%%%%%%%%%%%%%%%%%%%%%%%%%%%%

\subsection{Topological Entropy}

Topological entropy is a concept from dynamical systems where the space is a compact metric space and the map defined there is continuous \cite{lind1995symbolic}. In dynamics, successive applications of the map are applied and the long term behavior of the system is studied.  An orbit of a point $x$ for the map $T$ is the set $\{T^n(x) \; : \; n\in\mathbb{Z}\}$.  Topological entropy is an abstract concept meant to determine the exponential growth of distinguishable orbits of the dynamical system up to arbitrary scale.  A positive quantity for topological entropy reflects chaos in the system \cite{chaos}.
This concept was motivated by Kolmogorov and Sinai's theory of measure-theoretic entropy in ergodic theory 
\cite{kolmogorov1959entropy,sinai1959notion}, which in turn is related to Shannon entropy \cite{shannon48mathematical}.  An example of a topological dynamical system is a sofic shift, which is a symbolic system that is intricately related to DFA. Instead of defining the topological entropy of a sofic shift symbolically, which is classical, we will use the graph theoretic description.

A \textit{sofic shift} can be thought of as the space of biinfinite walks
(i.e. walks with no beginning and no end) on a right-solving labeled directed graph (a right-solving labeled graph has a unique label for each edge leaving a given node).  Suppose $G$ is a directed graph where $V$ is the set of vertices and $E$ is the set of edges of $G$.  Furthermore, suppose that every edge in $E$ is labeled with a symbol from $\Sigma$, and that there is at most one outgoing edge from each vertex with a given label (i.e. \textit{right-solving}).  Note that this construction is similar to a DFA, however there are no initial and final states.  A biinfinite walk on $G$ with a specified base vertex is an infinite walk in both directions (forward and backward) on the graph.  This biinfinite walk corresponds to a biinfinite string of symbols from $\Sigma$. See Figure~\ref{fig:sofic_ex}.

\begin{figure}
    \centering
    \begin{minipage}{0.45\linewidth}
        \includegraphics[width=\linewidth]{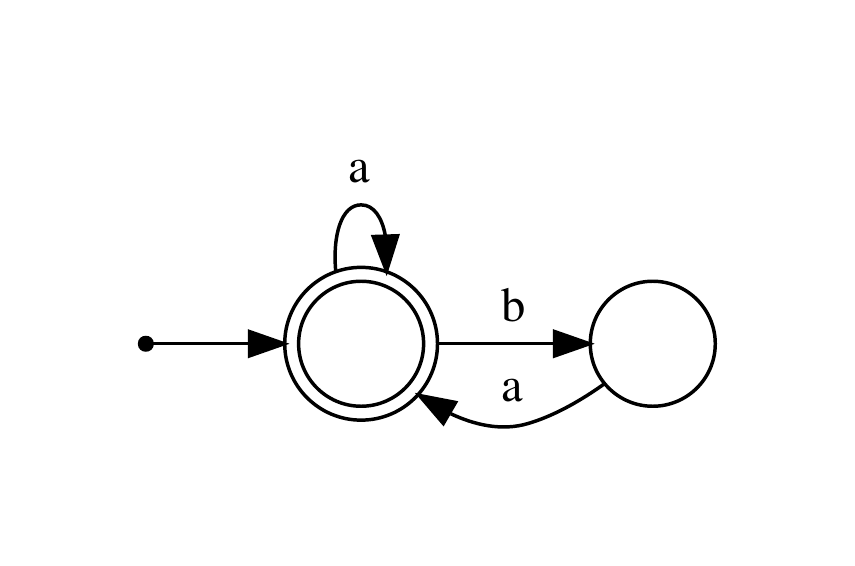}\\
        $aaa$, $aaba$, $ba$, $aaaaaaa$, $aaabaabaa$, ...
    \end{minipage}
    \hfill
    \begin{minipage}{0.45\linewidth}
        \includegraphics[width=0.805\linewidth]{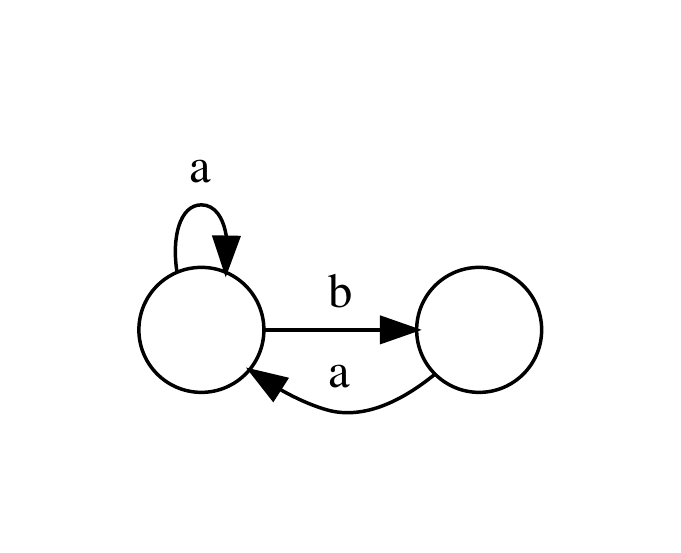}\\
        ... $aabaabaabaaaabaaaaaabaaaaab$ ...
    \end{minipage}
    \caption{
        \label{fig:sofic_ex}
        A DFA with some accepted strings and
        a sofic shift with a piece of a biinfinite string.
    }
\end{figure}

We will call a finite block of symbols \textit{admissible} if there is a biinfinite string of symbols corresponding to a biinfinite walk on $G$ and this finite block appears somewhere within the biinfinite string. 
Note that all words in the DFA's language will contain a substring of almost the same length that is an admissible block, while not all admissible blocks will be in the
associated DFA's language.
Denote the set of admissible blocks of length $n$ corresponding to $G$ by $B_n(G)$.  The \textit{topological entropy} of the sofic shift represented by the right-solving labeled graph $G$ is denoted by $h_t(G)$ and is defined by
\[h_t(G)=\lim_{n\rightarrow \infty}\frac{\log \left|B_n(G)\right|}{n}.\]

Using Perron-Forbenius theory it has been proven that the topological entropy of a sofic shift represented by a right-solving labeled graph $G$ is equal to the log base $2$ of the spectral radius of the adjacency matrix of $G$ \cite{lind1995symbolic}.  That is, the topological entropy is given by the log of the adjacency matrix's largest
modulus eigenvalue. Algorithms for computing eigenvalues are well known
and run in time polynomial in the width of the matrix \cite{francis1961qr}.

As you can see, sofic shifts are very similar to DFA.  Given a DFA, $M$, one
can construct a sofic shift by thinking of $M$ as a labeled directed graph and
creating the \textit{trim graph} by removing all states that are not part of an
accepting path. Information regarding initial and final states is no longer
needed. Note that the graph $M$ is naturally right-solving because of the determinism of DFA.
It is also easiest to remove from $M$ all vertices that do not have both an
outgoing and incoming edge (since we are now interested in biinfinite walks).
The resulting graph is called the \textit{essential graph}.  
At this point one
is free to apply the above definition and compute the topological entropy of
the sofic shift corresponding to the DFA.  This quantity can be computed by analyzing the irreducible components.

\begin{theorem}[\cite{lind1995symbolic}]\label{entropy_irr_components}
Suppose that $G$ is the labeled directed graph associated to a sofic shift.  If $G_1,\ldots, G_k$ are the irreducible components of $G$, then 
\[h_t(G)=\max_{1\leq i \leq k} h_t(G_i).\]
\end{theorem}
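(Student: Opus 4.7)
The plan is to invoke the spectral-radius characterization of topological entropy stated just before the theorem: $h_t(G) = \log \rho(A_G)$, where $\rho(A_G)$ denotes the largest modulus of an eigenvalue of the adjacency matrix $A_G$. Under this identification, the claim $h_t(G) = \max_i h_t(G_i)$ reduces to the linear-algebraic identity $\rho(A_G) = \max_i \rho(A_{G_i})$, which I would prove by block-triangular decomposition.

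First, I would form the condensation of $G$, the DAG whose vertices are the strongly connected components $G_1, \ldots, G_k$, and list them in a topological order. Reindexing the vertices of $G$ so that all vertices of the first component come first, then those of the second, and so on, rewrites $A_G$ in block upper-triangular form, with diagonal blocks equal to $A_{G_1}, \ldots, A_{G_k}$ and above-diagonal blocks encoding the cross-edges between distinct components. Since the spectrum (with multiplicity) of a block upper-triangular matrix is the union of the spectra of its diagonal blocks, $\rho(A_G) = \max_i \rho(A_{G_i})$, and taking logarithms yields the theorem.

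As a sanity check from the combinatorial definition, the inequality $h_t(G) \geq h_t(G_i)$ is immediate: every biinfinite walk staying inside $G_i$ is also a biinfinite walk in $G$, so $B_n(G_i) \subseteq B_n(G)$. For the reverse direction, any admissible block of length $n$ in $G$ is a finite path which, because the condensation is a DAG, visits each component at most once and uses at most $k-1$ cross-edges. Partitioning such a path according to the lengths of its intersections with the visited components gives a bound of the shape
\[
|B_n(G)| \;\leq\; C \cdot n^{k} \cdot \max_{i} |B_n(G_i)|
\]
for a constant $C$ depending only on $G$. After applying $\log$ and dividing by $n$, the polynomial prefactor vanishes in the limit, yielding $h_t(G) \leq \max_i h_t(G_i)$.

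The hard part is essentially bookkeeping: once the spectral identity is invoked, the block-triangular structure is standard, and the combinatorial argument only requires observing that cross-edges cannot be re-traversed (automatic from the DAG property of the condensation) so they contribute only a polynomial-in-$n$ overhead that washes out in the $\log |B_n(G)|/n$ limit. No new ideas beyond the condensation decomposition are needed.
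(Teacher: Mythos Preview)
The paper does not supply its own proof of this theorem: it is stated with a citation to Lind and Marcus \cite{lind1995symbolic} and is used as a black box. So there is no ``paper's proof'' to compare against.

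That said, your proposal is correct and is essentially the standard argument found in the cited reference. Ordering the strongly connected components topologically makes $A_G$ block upper-triangular with diagonal blocks $A_{G_1},\ldots,A_{G_k}$, whence $\rho(A_G)=\max_i\rho(A_{G_i})$, and the spectral-radius characterization of $h_t$ finishes the job. Your combinatorial cross-check is also fine in spirit; the only place to be slightly more careful is the displayed inequality, where what one really bounds is a sum over ordered partitions $n_1+\cdots+n_j\le n$ of products $\prod_\ell |B_{n_\ell}(G_{i_\ell})|$, and one needs $|B_m(G_i)|\le C_i\,\rho(A_{G_i})^m$ (from Perron--Frobenius) rather than $|B_m(G_i)|\le |B_n(G_i)|$ to collapse the product to $(\max_i\rho(A_{G_i}))^n$ times a polynomial factor. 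With that small refinement the argument goes through cleanly.
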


In the next subsection we will
introduce the language entropy and show that it is the same as the topological
entropy of the sofic shift corresponding to a DFA.

%%%%%%%%%%%%%%%%%%%%%%%%%%%%%%%%%%%%%%%%%%%%%%%%%%%%%%%%%%%%%%%%%%%%%%%%%%%%%%%%%%%%%%%%%%%%%%%%%%%%%%%%%%%%%%%%%

\subsection{Language Entropy}

Traditionally, the entropy of a regular language $L$ (also called the \textit{channel capacity} \cite{chomsky1958finite} or \textit{information rate} \cite{Cui_Dang_etal}) is defined as $\limsup_{n\rightarrow\infty}\frac{\log \left| W_n(L)\right|}{n}$.  This limit may not exist and so an upper limit is necessary. We will show that this upper limit is realized by the topological entropy of the corresponding sofic shift and define another notion of language entropy, which is preferable since an upper limit is not necessary.

\begin{definition}[Language Entropy]
Given a regular language $L$ define the \textit{language entropy} by 
$h(L)=\lim_{n\rightarrow\infty} \frac{\log \left| W_{\leq n}(L)\right|}{n}.$
\end{definition}

\begin{theorem}\label{language_entropy}
Let $L$ be a non-empty regular language over the set of symbols $\Sigma$, and let $G$ be the labeled directed graph of the associated sofic shift.
We have that 
\[\limsup_{n \rightarrow \infty} \frac{\log\left|W_n(L) \right|}{n} = h_t(G). \]
Moreover, for a fixed language $L$ there exists a constant $c$ such that there is an increasing sequence of integers $n_i$ satisfying $0 < n_{i+1} - n_i \leq c$ and 
\[\lim_{i \rightarrow \infty} \frac{\log\left|W_{n_i}(L) \right|}{n_i} =  h_t(G). \]
\end{theorem}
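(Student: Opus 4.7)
The plan is to prove both parts at once: a straightforward upper bound and a matching lower bound along a bounded-gap subsequence. Throughout, write $\lambda = 2^{h_t(G)}$, which by the Perron-Frobenius result cited after Theorem~\ref{entropy_irr_components} equals the spectral radius of the adjacency matrix of $G$.

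For the upper bound, observe that every $w \in W_n(L)$ labels a walk in the DFA from $q_0$ to some $q \in F$; every state on this walk lies on an accepting path, so the walk lives in the trim and essential graph $G$, making $w$ an admissible block. Hence $|W_n(L)| \leq |B_n(G)|$, and passing to $\limsup$ gives
\[\limsup_{n \to \infty} \frac{\log |W_n(L)|}{n} \;\leq\; \lim_{n \to \infty} \frac{\log |B_n(G)|}{n} \;=\; h_t(G).\]

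For the lower bound, apply Theorem~\ref{entropy_irr_components} to choose an irreducible component $G^*$ of $G$ with $h_t(G^*) = h_t(G)$, and let $A^*$ and $p$ denote its adjacency matrix and period. Because $G^*$ lies in the trim graph, each of its vertices sits on some accepting path, so I can fix $u, v \in G^*$ together with strings $\alpha, \beta$ of lengths $\ell_1, \ell_2$ labeling walks $q_0 \to u$ and $v \to q_F$ for some $q_F \in F$. For each walk $u \to v$ inside $G^*$ of length $k$ with label $w$, the concatenation $\alpha w \beta$ lies in $W_{\ell_1 + k + \ell_2}(L)$, and the right-solving property of $G$ guarantees these strings are distinct across distinct walks. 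Letting $C_0, \ldots, C_{p-1}$ be the cyclic classes of $G^*$ with $u \in C_a$, $v \in C_b$, and $r = (b - a) \bmod p$, Perron-Frobenius applied to the primitive matrix $(A^*)^p$ restricted to $C_a$ produces a constant $c_0 > 0$ with $(A^*)^{pm + r}_{uv} \geq c_0 \lambda^{pm + r}$ for all sufficiently large $m$. Setting $n_i = \ell_1 + \ell_2 + r + p(m_0 + i)$ for a large enough $m_0$ yields an increasing sequence with $n_{i+1} - n_i = p$ and
\[|W_{n_i}(L)| \;\geq\; c_0\, \lambda^{n_i - \ell_1 - \ell_2},\]
so $\lim_i \tfrac{\log |W_{n_i}(L)|}{n_i} = \log \lambda = h_t(G)$. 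Taking the theorem's constant to be $p$ proves the subsequence claim, and combining it with the upper bound forces equality of the $\limsup$.

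The main obstacle is the periodicity: the choice of $u$ and $v$ pins down a residue class $r \bmod p$ into which the lengths $k$ of $u \to v$ walks must fall, and one must show that within this class the walk count really does grow like $\lambda^k$ rather than merely being bounded above. This requires decomposing $A^*$ by cyclic class and invoking Perron-Frobenius on a primitive block of $(A^*)^p$ rather than on the imprimitive matrix $A^*$ itself. A secondary technicality is to verify that the passage from the full DFA to the essential graph $G$ preserves the irreducible components of positive entropy, which follows because every such component is strongly connected and lies wholly on accepting paths, so its vertices are never removed by trimming.
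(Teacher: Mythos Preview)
The upper bound has a genuine gap. You assert that every $w \in W_n(L)$ labels a walk in $G$ and is therefore an admissible block, on the grounds that every state on the accepting walk lies on an accepting path. But lying on an accepting path only guarantees that a state survives in the \emph{trim} graph; the \emph{essential} graph $G$ further discards vertices lacking both an incoming and an outgoing edge, and the initial state $q_0$ is typically among those discarded. For $L = (a|b)c^*$ over $\{a,b,c\}$ the trim DFA has states $q_0, q_1$ with $a$- and $b$-edges from $q_0$ to $q_1$ and a $c$-loop at the final state $q_1$; the essential graph $G$ is just that $c$-loop, so $|B_n(G)| = 1$ for every $n$, while $|W_n(L)| = 2$ for every $n \geq 1$. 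Your inequality $|W_n(L)| \leq |B_n(G)|$ therefore fails. The paper handles this with the Pumping Lemma: the accepting walk for any $w \in W_n(L)$ must revisit a state within its first $|Q|$ steps, the resulting cycle lies in $G$, and iterating this extraction produces an admissible subword of $w$ of length at least $n - |Q|$; the at most $|\Sigma|^{|Q|}$ choices for the leftover prefix and suffix contribute only a constant that vanishes in the $\limsup$.

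Your lower-bound argument, by contrast, is correct and in fact tidier than the paper's (which pigeonholes admissible blocks of length $n - 2|Q|$ among words of $L$ whose lengths lie in a window of width $2|Q|$): you directly manufacture words in $W_{n_i}(L)$ by sandwiching $u\to v$ walks inside a maximal-entropy irreducible component $G^*$ between a fixed prefix $\alpha$ and suffix $\beta$, and Perron--Frobenius on the primitive diagonal block of $(A^*)^p$ gives the required growth $\asymp \lambda^{n_i}$ along an arithmetic progression of gap $p$. So only the upper-bound step needs repair.
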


As a corollary to this theorem we obtain an important statement regarding the connection between topological entropy (from dynamical systems) and language entropy (similar to Shannon's channel capacity).

\begin{corollary}\label{language_entropy_cor}
Let $L$ be a non-empty regular language over the set of symbols $\Sigma$, and let $G$ be the labeled directed graph of the associated sofic shift.
Then,
\[h(L) = \lim_{n \rightarrow \infty} \frac{\log\left|W_{\leq n}(L) \right|}{n} =  h_t(G). \]
\end{corollary}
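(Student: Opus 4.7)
The corollary amounts to showing that $\lim_{n\to\infty}\log|W_{\leq n}(L)|/n$ exists and equals $h_t(G)$, since the first equality is simply the definition of $h(L)$. The plan is to sandwich $\liminf$ and $\limsup$ of this ratio between two copies of $h_t(G)$, drawing on the two separate conclusions of Theorem~\ref{language_entropy}: the $\limsup$ statement supplies the upper bound, and the bounded-gap subsequence supplies the lower bound.

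For the upper bound I would unwind $|W_{\leq n}(L)|=\sum_{i=0}^n |W_i(L)|$. Fixing $\epsilon>0$, the first half of Theorem~\ref{language_entropy} yields $N$ with $|W_i(L)|\leq 2^{(h_t(G)+\epsilon)i}$ for all $i\geq N$. The initial segment $\sum_{i<N}|W_i(L)|$ contributes a bounded constant and the tail is dominated by $(n+1)\,2^{(h_t(G)+\epsilon)n}$. Taking logarithms, dividing by $n$, and sending $n\to\infty$ and then $\epsilon\to 0$ gives $\limsup_n \log|W_{\leq n}(L)|/n\leq h_t(G)$. The polynomial factor $(n+1)$ is absorbed harmlessly, even in the degenerate case $h_t(G)=0$.

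For the lower bound I would invoke the bounded-gap sequence $(n_i)$ of the second half of Theorem~\ref{language_entropy}. Given $n$, let $i(n)$ be the largest index with $n_{i(n)}\leq n$; the gap bound $n_{i+1}-n_i\leq c$ forces $n_{i(n)}\in (n-c,n]$. Since $W_{n_{i(n)}}(L)\subseteq W_{\leq n}(L)$ (and $|W_{n_i}(L)|>0$ for all large $i$, or else the limit in Theorem~\ref{language_entropy} would be undefined), we obtain
\[
\frac{\log|W_{\leq n}(L)|}{n}\;\geq\;\frac{\log|W_{n_{i(n)}}(L)|}{n_{i(n)}}\cdot\frac{n_{i(n)}}{n}.
\]
As $n\to\infty$, $i(n)\to\infty$, so the first factor tends to $h_t(G)$ by Theorem~\ref{language_entropy}, and the second factor tends to $1$ because $n_{i(n)}\in(n-c,n]$. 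Hence $\liminf_n \log|W_{\leq n}(L)|/n\geq h_t(G)$, and the sandwich closes.

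The genuine work is entirely contained in Theorem~\ref{language_entropy}; the hard part one must overcome there is the existence of the bounded-gap subsequence, without which the $W_{\leq n}$ limit could a priori fall strictly below $h_t(G)$ (the $\limsup$--$\liminf$ gap for $|W_n(L)|$ itself can be genuine, as in $(\Sigma^2)^*$). Once that subsequence is in hand, the corollary reduces to the elementary observation that partial sums of an exponentially-growing sequence grow at the same exponential rate whenever witnesses to that rate occur with bounded gaps.
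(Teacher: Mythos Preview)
Your proposal is correct and follows essentially the same sandwich argument as the paper: the upper bound via $|W_{\leq n}(L)|\leq (n+1)\max_{i\leq n}|W_i(L)|$ together with the $\limsup$ part of Theorem~\ref{language_entropy}, and the lower bound via the bounded-gap subsequence $(n_i)$ and $|W_{\leq n}(L)|\geq |W_{n_{i(n)}}(L)|$. The only cosmetic difference is that the paper bounds the sum by $n$ times the maximal term directly, whereas you pass through an $\epsilon$--$N$ argument; these are interchangeable.
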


\noindent \textbf{Proof of Theorem \ref{language_entropy}.}  
Let $\lambda$ be the topological entropy of the sofic shift.

Let $(Q, \Sigma, \delta, q_0, F)$ be a DFA for $L$, and let $w = w_1, \ldots, w_n \in L$.
For brevity, let $n' = |Q|$.
Because we are working with some fixed language, we assume that $n' \leq O(1)$.
Recall that the Pumping Lemma states that when $n > n'$, there exists a pair $i,j$ such that $1 \leq i < j \leq n'$ and 
$$ w_1, \ldots, w_{i-1} \left(w_i, \ldots, w_j\right)^* w_{j+1} \ldots, w_n \subseteq L.$$
The proof of this statement uses the fact that if the states of $Q$ seen as $w$ streams by are $q_0, q_1, \ldots, q_n$, then there exists a pair $i,j$ as above such that $q_i = q_j$ by the pigeon hole principle.

We claim that $\{q_i, \ldots, q_j\}$ are vertices in $G$.
For $i < \ell < j$, each vertex $q_\ell$ has an incoming edge (from vertex $q_{\ell-1})$ and an outgoing edge (to vertex $q_{\ell+1}$).
Because $q_i = q_j$, this vertex also has an incoming edge (from vertex $q_{j-1})$ and an outgoing edge (to vertex $q_{i+1}$).
Therefore, this cycle is part of the essential graph.
This proves the claim.

We can iterate this procedure on the word $w_1, \ldots, w_{i-1}, w_{j+1} \ldots, w_n$ to find another subword that is admissible.
We can inductively do this until at most $n'$ characters remain.
By construction, if vertices $q_i,q_j \in G$ and $i \leq \ell \leq j$, then $q_\ell \in G$.
It follows that there exists an $i$ and a $j$ such that $q_i, \ldots, q_j$ is in $G$ and $j - i \geq n - n'$.
Therefore the number of words in $L$ of length $n$ is at most the number of admissible blocks of length $n-O(1)$ times $O(|\Sigma|^{O(1)})$ choices for the prefix $w_1, \ldots, w_{i-1}$ and the suffix $w_{j+1}, \ldots, w_n$, which implies that 
$$ \limsup_{n \rightarrow \infty} \frac{\log\left| W_n \right|}{n} \leq \lambda . $$

Let $w = w_1,w_2, \ldots, w_{n-2n'}$ be an admissible block from the sofic shift using vertices $q_1', q_2' \ldots, q_{n-2n'}'$.
By the definition of a trim graph, there exists paths in our DFA $q_0, q_1, \ldots, q_i$ and $q_j, q_{j+1}, \ldots, q_{k}$ such that $q_i = q_1'$, $q_j = q_{n-2n'}'$ and $q_k \in F$.
We may choose these paths to be minimal, which implies that no state is repeated.  Thus, $i \leq n'$ and $k-j \leq n'$.
Therefore the path $q_0, q_1, \ldots, q_i , q_2', \ldots, q_{n-2n'} , q_{j+1}, \ldots, q_k$ is a valid path in our DFA of length between $n-2n'$ and $n$, and corresponds to a word in $L$ that contains $w$ as a subword.

So each admissible block from the sofic shift of length $n-2n'$ appears in some word of $L$ whose length is between $n-2n'$ and $n$.
Each word in $L$ of length at most $n$ may contain at most $2n'$ distinct admissible blocks of length $n-2n'$ (one for each substring starting
at offsets 0, 1, 2, ..., $2n'$).
Therefore, there exists an $m$ such that $n-2n' \leq m \leq n$ and $|W_m| \geq \frac{1}{(2n')^2} \left|B_{n-2n'}(G)\right|$.
Because $n' \leq O(1)$, this proves the second part of the theorem with $c = 2n'$. It also implies that
$$ \limsup_{n \rightarrow \infty} \frac{\log\left| W_n \right|}{n}
\geq \lambda . $$ 
Which suffices to prove the first part of the theorem.  \qed

\noindent \textbf{Proof of Corollary \ref{language_entropy_cor}.}
Let $\lambda$ be the topological entropy of the sofic shift.

To show that $\lim_{n\rightarrow\infty} \frac{\log \left| W_{\leq n}(L)\right|}{n} = \lambda$ we will show that 
$$\lambda \leq \liminf_{n\rightarrow\infty}\frac{\log \left| W_{\leq n}(L)\right|}{n} \leq \limsup_{n\rightarrow\infty} \frac{\log \left| W_{\leq n}(L)\right|}{n} \leq \lambda.$$

Let $\left| W_n(L)\right| = a_n$.  For fixed $n$, let $a_{k_n}=\max\left(a_1,\ldots, a_n\right)$.  Observe that,
\begin{eqnarray*}
\limsup_{n\rightarrow\infty} \frac{\log \left| W_{\leq n}(L)\right|}{n} &=& \limsup_{n\rightarrow\infty} \frac{\log \left( a_1+\cdots + a_n\right)}{n} \\
&\leq& \limsup_{n\rightarrow\infty}\frac{\log\left( na_{k_n}\right)}{n}\\
&\leq& \limsup_{n\rightarrow\infty}\frac{\log\left( a_{k_n}\right)}{k_n} = \lambda
\end{eqnarray*}
by Theorem \ref{language_entropy}.

For the lower bound we will use the second part of Theorem \ref{language_entropy}.  For fixed $n$, let $n_k$ be the largest element from the subsequence $\left( n_i\right)$ (from Theorem \ref{language_entropy}) such that $n_k \leq n$.  In this case, $n-n_k \leq c$ where $c$ is given in the theorem.  Thus,
\begin{eqnarray*}
\liminf_{n\rightarrow\infty}  \frac{\log \left| W_{\leq n}(L)\right|}{n} &=& \liminf_{n\rightarrow\infty} \frac{\log \left( a_1+\cdots + a_n\right)}{n} \\
&\geq& \liminf_{n\rightarrow\infty}\frac{\log \left(a_{n_k}\right)}{n_k + (n-n_k)}\\
&\geq& \liminf_{n\rightarrow\infty}\frac{\log \left(a_{n_k}\right)}{n_k + c} =\lambda. \qed
\end{eqnarray*}

There are some simple properties of language entropy
which will be useful later.  The first is a simple re-phrasing of Corollary \ref{language_entropy_cor}.

\begin{lemma} \label{asymptotic growth}
For any regular language $L$, we have that $\left|W_{\leq n}(L)\right| = 2^{n(h(L) + o(1))}$.
\end{lemma}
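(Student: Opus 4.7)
The plan is to observe that this lemma is essentially a rewriting of Corollary~\ref{language_entropy_cor} after taking logarithms. The corollary asserts that the limit
\[
h(L) \;=\; \lim_{n \to \infty} \frac{\log |W_{\leq n}(L)|}{n}
\]
exists as a finite real number for every non-empty regular language $L$. I want to convert this statement about a convergent ratio into the multiplicative asymptotic $|W_{\leq n}(L)| = 2^{n(h(L) + o(1))}$.

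The first step is to define $\varepsilon_n := \frac{\log |W_{\leq n}(L)|}{n} - h(L)$ and observe, directly from Corollary~\ref{language_entropy_cor}, that $\varepsilon_n \to 0$, i.e.\ $\varepsilon_n = o(1)$. Rearranging gives $\log |W_{\leq n}(L)| = n\bigl(h(L) + \varepsilon_n\bigr) = n\bigl(h(L) + o(1)\bigr)$, and exponentiating base $2$ yields the claimed estimate. The only edge case requiring separate handling is the empty language (or, more generally, languages for which $|W_{\leq n}(L)|$ might be $0$ for small $n$, making the logarithm undefined); this is dispatched by noting that $|W_{\leq n}(L)| \geq 1$ for all sufficiently large $n$ whenever $L$ is non-empty, so the logarithm is well defined in the tail and the asymptotic is unaffected. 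The empty-language case can be treated separately with the convention $h(\emptyset) = 0$ or simply excluded, consistent with the hypothesis used in the preceding corollary.

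There is no substantive obstacle here: the lemma is really a notational convenience packaging the limit definition of $h(L)$ into a form that is more pleasant to substitute into later calculations (for example, into ratios such as $|W_{\leq n}(L_1)| / |W_{\leq n}(L_2)|$ that arise in the Jaccard-type expressions). In that sense the ``proof'' is a one-line unpacking of the definition, and the content of the lemma lies entirely in Corollary~\ref{language_entropy_cor}, which in turn rests on Theorem~\ref{language_entropy}.
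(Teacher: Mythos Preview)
Your proposal is correct and matches the paper's own treatment: the paper simply remarks that this lemma is ``a simple re-phrasing of Corollary~\ref{language_entropy_cor},'' which is exactly the one-line unpacking you give. Your handling of the empty-language edge case is a reasonable addition, though the paper does not dwell on it.
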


\begin{lemma}
    \label{lemma:union_ent}
    Suppose $L_1$ and $L_2$ are regular languages over $\Sigma$.  The following hold:
    \begin{enumerate}
        \item If $L_1\subseteq L_2$, then $h(L_1)\leq h(L_2)$.
        \item $h(L_1\cup L_2) = max(h(L_1), h(L_2))$
        \item $\max(h(L_1), h(\overline{L_1})) = \log \left| \Sigma\right|$
        \item If $h(L_1)<h(L_2)$, then $h(L_2\setminus L_1)=h(L_2)$.
        \item If $L_1$ is finite, then $h(L_1) = 0$.
    \end{enumerate}
\end{lemma}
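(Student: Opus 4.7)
The plan is to derive all five parts directly from the limit definition $h(L) = \lim_{n\to\infty} \frac{\log |W_{\leq n}(L)|}{n}$, together with the fact (Corollary~\ref{language_entropy_cor}) that this limit exists and equals the topological entropy of the associated sofic shift. The key observation is that $W_{\leq n}$ is monotone with respect to set inclusion and behaves nicely under finite unions, which lets us push everything through the logarithm.

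For part (1), I would note that $L_1 \subseteq L_2$ implies $W_{\leq n}(L_1) \subseteq W_{\leq n}(L_2)$, so the inequality on cardinalities is preserved after taking $\log$ and dividing by $n$; passing to the limit gives $h(L_1) \leq h(L_2)$. For part (2), the inclusion $L_i \subseteq L_1 \cup L_2$ together with part~(1) gives the lower bound $h(L_1 \cup L_2) \geq \max(h(L_1), h(L_2))$. For the upper bound, I use the union bound $|W_{\leq n}(L_1 \cup L_2)| \leq |W_{\leq n}(L_1)| + |W_{\leq n}(L_2)| \leq 2 \max(|W_{\leq n}(L_1)|, |W_{\leq n}(L_2)|)$; taking $\log$, dividing by $n$, and letting $n \to \infty$ the extra factor of $2$ contributes $o(1)$ and disappears in the limit. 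Part (3) then follows by applying part~(2) to the decomposition $\Sigma^* = L_1 \cup \overline{L_1}$ and computing directly that $|W_{\leq n}(\Sigma^*)| = \sum_{i=0}^n |\Sigma|^i$, whose logarithm divided by $n$ tends to $\log|\Sigma|$.

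Part (4) is a clean corollary of part (2): write $L_2 = (L_2 \setminus L_1) \cup (L_1 \cap L_2)$, apply part~(2) to get $h(L_2) = \max(h(L_2 \setminus L_1), h(L_1 \cap L_2))$, and use part~(1) together with $L_1 \cap L_2 \subseteq L_1$ to obtain $h(L_1 \cap L_2) \leq h(L_1) < h(L_2)$. The only way to reconcile the maximum with the strict inequality is $h(L_2 \setminus L_1) = h(L_2)$. Part (5) is immediate: if $L_1$ is finite then $|W_{\leq n}(L_1)| \leq |L_1|$ is a fixed constant, so $\log|W_{\leq n}(L_1)|/n \to 0$.

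None of the five parts presents a real obstacle; the only subtle point is the factor of $2$ in the union bound in part~(2), and the reason it is harmless is precisely that we are dividing the logarithm by $n \to \infty$. Aside from that, everything is a direct consequence of monotonicity and elementary set algebra applied to the well-defined limit furnished by Corollary~\ref{language_entropy_cor}.
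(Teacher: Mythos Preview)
Your proof is correct, and for parts (1), (3), (4), and (5) it matches the paper's argument essentially verbatim. The one genuine difference is part~(2): the paper proves $h(L_1\cup L_2)=\max(h(L_1),h(L_2))$ by invoking Theorem~\ref{entropy_irr_components}, i.e., the symbolic-dynamics fact that the topological entropy of a sofic shift is the maximum over its irreducible components, combined with Corollary~\ref{language_entropy_cor} identifying $h(L)$ with $h_t(G)$. You instead give a direct counting argument: the lower bound from monotonicity, and the upper bound from $|W_{\leq n}(L_1\cup L_2)|\leq 2\max(|W_{\leq n}(L_1)|,|W_{\leq n}(L_2)|)$, with the factor of $2$ vanishing after dividing $\log$ by $n$. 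Your route is more elementary and self-contained---it needs only the existence of the limit (Corollary~\ref{language_entropy_cor}), not its identification with topological entropy---whereas the paper's route exploits the structural machinery already developed and makes the result an immediate corollary of the graph decomposition.
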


\begin{proof} Each part is proven in turn:
    
      (1) When $L_1\subseteq L_2$, $W_{\leq n}(L_1)\subseteq W_{\leq n}(L_2)$.  Thus
            $\lim_{n\rightarrow\infty}\frac {\log |W_{\leq n}(L_1)|}{n}
            \leq \lim_{n\rightarrow\infty}\frac {\log |W_{\leq n}(L_2)|}{n}$.
            
      (2)  This is a consequence of Theorem \ref{entropy_irr_components}.

      (3) Notice that $L_1\cup\overline{L_1}=\Sigma^*$ and $h(\Sigma^*)=\log \left| \Sigma\right|$.  The result follows by part 2.

      (4)  Notice that $L_2 = (L_2\setminus L_1) \cup (L_1\cap L_2)$.  Since $L_1\cap L_2 \subset L_1$ we have that $h(L_1\cap L_2)\leq h(L_1)< h(L_2)$ by part 1.  Thus 
      $h(L_2) = h((L_2\setminus L_1) \cup (L_1\cap L_2)) =\max (h(L_2\setminus L_1), h(L_1\cap L_2)) = h(L_2\setminus L_1)$.
      
      (5)  This is trivial.
             
\end{proof}

%%%%%%%%%%%%%%%%%%%%%%%%%%%%%%%%%%%%%%%%%%%%%%%%%%%%%%%%%%%%%%%%%%%%%%%%%%%%%%%%%%%%%%%%%%%%%%%%%%%%

\subsection{Proof of Theorem \ref{J_C_defined}} \label{proof of J_C}

Recall Rothblum's theorem as discussed in Section \ref{finite_Jaccard} that for a matrix $A$ there exists $q$ and matrix polynomials $S_0(x), S_1(x), \ldots, S_{q-1}(x)$ such that $\lim_{n \rightarrow \infty} \left(A/\lambda\right)^{qn+k} - S_k(qn+k) = 0$ for each $0 \leq k < q$.
Let $S_k'(x) = i_A S_k(x) f_A$, so that $S_k'(x)$ is a polynomial with real numbers as coefficients.
We begin by providing a proof that $J_C'(L_1, L_2) = \frac{1}{n} \sum_{i=1}^n J_i'(L_1, L_2)$ is well-defined if $S_k'(x) \neq 0$ for all $k$.

%Recall that $|W_n(L_1 \triangle L_2)|$ and $|W_n(L_1 \cup L_2)|$ can be calculated using powers of specific matrices.
%If we take $Q$ to be the least common multiple of the period from each of the matrices associated with $|W_n(L_1 \triangle L_2)|$ and $|W_n(L_1 \cup L_2)|$, we can immediately see that $\lim_{n \rightarrow \infty} J_{Qn + k}'(L_1,L_2)$ exists, via Rothblum \cite{Rothblum_expansion_of_sums} discussed in Section \ref{finite_Jaccard}.
%Moreover, it will equal zero if they have different values for the largest eigenvalue or the degree of $S_k(x)$.
%But if they have the same value for the largest eigenvalue and degree of $S_k(x)$, then $\lim_{n \rightarrow \infty} J_{Qn + k}'(L_1,L_2)$ will be the ratio of the leading coefficients of the polynomials $S_k(x)$ for the two matrices.
Rothblum's result says that $|W_{qn+k}(L_1 \triangle L_2)| = \lambda^{qn+k} (S_k'(qn+k) + o(1))$.
We can apply the same argument to $|W_n(L_1 \cup L_2)|$, and let $Q$ be the least common multiple of the periods.
For each $k$, it becomes clear that $\lim_{n \rightarrow \infty} \frac{|W_{Qn+k}(L_1 \triangle L_2)|}{|W_{Qn+k}(L_1 \cup L_2)|}$ will either (1) exponentially decay (it can not exponentially grow as Jaccard distances are at most $1$) or (2) if the value of $\lambda$ is the same for $|W_n(L_1 \triangle L_2)|$ and $|W_n(L_1 \cup L_2)|$, then $\frac{|W_{Qn+k}(L_1 \triangle L_2)|}{|W_{Qn+k}(L_1 \cup L_2)|}$ becomes a ratio of polynomials whose limiting behavior is determined by the leading coefficients of $S_k'(x)$.
And so $J_C'(L_1, L_2) = \frac{1}{n} \sum_{i=1}^n J_i'(L_1, L_2)$ will be the average of these values.

It should be clear that the same result holds for $J_C(L_1, L_2)$ using Part II of Theorem \ref{sum of matrices} in replace of Rothblum's theorem if for each $k$, $\frac1{q(d'+1)}\sum_{i=0}^{q-1}s_\ell' \neq 0$ (when $\lambda = 1$) or $\frac1{\lambda^k(1-\lambda^{-q})}\sum_{\ell = k-q+1}^k \lambda^\ell s_\ell ' \neq 0$ (when $\lambda > 1$).

Now let us discuss what happens when $S_k'(x) = 0$.
This can happen; and it \emph{will} happen if $h(L) < \log(|\Sigma|)$ and the trash state was not trimmed from the DFA.

\begin{example} \label{J_C_example again}
Let us return to Example \ref{J_C_example}.
The matrix 
$$A = \left(\begin{array}{ccccc} 0 & 3 & 0 & 2 & 2 \\ 0 & 0 & 3 & 0 & 0 \\0 & 3 & 0 & 0 & 0 \\0 & 0 & 0 & 2 & 0 \\0 & 0 & 0 & 0 & 2  \end{array}\right)$$
 and vectors $i = (1, 0, 0, 0, 0)$, $f_\cup = (0, 0, 1, 1, 1)^t$, and $f_\triangle = (0,0,0,1,1)^t$ satisfy $|W_n(L_1 \cup L_2)| = i A^n f_\cup$ and $|W_n(L_1 \triangle L_2)| = i A^n f_\triangle$.
Note that $A$ represents a trim DFA for language $L_1 \cup L_2$, but the DFA is not trim for the language $L_1 \triangle L_2$.
Applying Rothblum's theorem to $A$ gives us $q = 2$, each of $S_0(x), S_1(x)$ has degree $0$, $\lambda = 3$, and 
$$S_j(x) = \frac{1}{2}\left(\begin{array}{ccccc} 0 & 1 & 1 & 0 & 0 \\ 0 & 1 & 1 & 0 & 0 \\0 & 1 & 1 & 0 & 0 \\0 & 0 & 0 & 0 & 0 \\0 & 0 & 0 & 0 & 0  \end{array}\right) + \frac{(-1)^j}{2}\left(\begin{array}{ccccc} 0 & -1 & 1 & 0 & 0 \\ 0 & 1 & -1 & 0 & 0 \\0 & -1 & 1 & 0 & 0 \\0 & 0 & 0 & 0 & 0 \\0 & 0 & 0 & 0 & 0  \end{array}\right).$$
A direct calculation gives us $i S_1(x) f_\cup = i S_0(x) f_\triangle = i S_1(x) f_\triangle = 0$ and $i S_0(x) f_\cup = 1$.
Thus our approximation of $\frac{|W_{2n+1}(L_1 \triangle L_2)|}{|W_{2n+1}(L_1 \cup L_2)|}$ is $\frac00$.

A trim DFA for $L_1 \triangle L_2$ would have adjacency matrix 
$$\left(\begin{array}{ccc} 0 & 2 & 2 \\ 0 & 2 & 0 \\0 & 0 & 2  \end{array}\right)$$
with initial vector $(1,0,0)$, final vector $(0,1,1)^t$, and Rothblum's theorem gives us $\lambda = 2$, $q=1$, and 
$$S_0(x) = \left(\begin{array}{ccc} 0 & 1 & 1 \\ 0 & 1 & 0 \\0 & 0 & 1  \end{array}\right).$$
\end{example}

In the asymptotics of $J_C'(L_1, L_2)$, each congruence class $k$ is evaluated independently of the other congruence classes.
On the other hand, in $J_C(L_1, L_2)$ each congruence class $k$ has a limit that is a combination of $s_\ell'$ for all $0 \leq \ell < q$.
Thus the total answer is dominated by the overall asymptotic behavior and not just small periodic undercurrents.
This provides a rigorous explanation for Example \ref{J_C_example}.
It also allows for the following lemma to be the key to the convergence of $J_C$.

\begin{lemma}\label{one good one}
Using the notation of Theorem \ref{sum of matrices}, there exists $k$ such that $s_k' \neq 0$ when $A$ corresponds to a trim DFA.
\end{lemma}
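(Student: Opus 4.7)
The plan is to argue by contradiction: suppose $u S_k(x) v \equiv 0$ for every $k \in \{0,1,\ldots,q-1\}$. In the setting of the lemma, $u = i_A$ and $v = f_A$, so $u A^n v = |W_n(L)|$ for the language $L$ recognized by the DFA. Rothblum's theorem gives
\[|W_{qn+k}(L)| = \lambda^{qn+k}\bigl(S_k'(qn+k) + o(1)\bigr),\]
and the assumption $S_k'(x) \equiv 0$ in every residue class forces $|W_n(L)| = o(\lambda^n)$ as $n\to\infty$.

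To contradict this I would produce an infinite arithmetic progression of $n$ along which $|W_n(L)| \geq c\lambda^n$ for some $c>0$. By Theorem \ref{entropy_irr_components}, some irreducible component $C$ of the graph of $A$ has spectral radius exactly $\lambda$. Fix any $q_1,q_2 \in C$. Trimness supplies a directed path from $q_0$ to $q_1$ of some length $a$ and a directed path from $q_2$ to some final state of some length $b$. Prepending and appending these two fixed paths to any walk from $q_1$ to $q_2$ of length $n-a-b$ that stays inside $C$ yields a valid accepting walk of length $n$, so
\[|W_n(L)| \;\geq\; \bigl(A_C^{\,n-a-b}\bigr)_{q_1,q_2},\]
where $A_C$ is the adjacency matrix of the subgraph induced on $C$.

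The last step is Perron--Frobenius applied to the irreducible matrix $A_C$, which has spectral radius $\lambda$ and some period $p$. Classical theory gives $(A_C^m)_{q_1,q_2}/\lambda^m \to c > 0$ along the unique residue class $r \pmod p$ in which walks from $q_1$ to $q_2$ of length $m$ exist. Hence $|W_n(L)| \geq (c/2)\lambda^n$ for all sufficiently large $n\equiv r+a+b\pmod p$, contradicting $|W_n(L)| = o(\lambda^n)$. Consequently at least one $S_k'(x)$ is a nonzero polynomial, $d'$ is well defined, and the leading coefficient $s_k'$ of any $S_k'$ realizing the maximum degree $d'$ is nonzero.

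The main obstacle is the periodicity bookkeeping: the Rothblum period $q$ of $A$ and the Perron--Frobenius period $p$ of the component $C$ need not coincide, but this causes no trouble because the arithmetic progression on which $|W_n(L)|\geq c\lambda^n$ meets at least one residue class modulo $q$ infinitely often, which is all we need. Trimness is used in exactly one place -- to guarantee that $C$ is reachable from $q_0$ and coreachable to a final state -- which is precisely what fails in the $L_1 \triangle L_2$ matrix of Example \ref{J_C_example again}.
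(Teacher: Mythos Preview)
Your argument is correct and structurally parallel to the paper's---both argue by contradiction, deducing from $S_k'\equiv 0$ for all $k$ an upper bound on $|W_n(L)|$ that conflicts with a lower bound coming from trimness---but the two halves are obtained by different means. For the upper bound, the paper invokes the \emph{geometric} rate in Rothblum's theorem to get $|W_{qn+k}(L)| \le i_A C f_A\,(\lambda(1-\epsilon))^{qn+k}$, hence $\limsup_n \log|W_n(L)|/n < \log\lambda$; you use only the bare convergence to get $|W_n(L)|=o(\lambda^n)$, which is weaker but sufficient for your purposes. For the lower bound, the paper simply cites Theorem~\ref{language_entropy} (which already established $\limsup_n \log|W_n(L)|/n = h_t(G)=\log\lambda$ for a trim DFA), whereas you build the competing estimate $|W_n(L)|\ge c\lambda^n$ along an arithmetic progression by hand: locate an irreducible component $C$ of spectral radius $\lambda$, use trimness to splice in fixed initial and terminal paths, and apply Perron--Frobenius to $A_C$. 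What your route buys is independence from both Theorem~\ref{language_entropy} and the sharper form of Rothblum's theorem; what the paper's route buys is brevity, since the work was already done earlier. In effect your construction is a direct, self-contained reproof of exactly the fragment of Theorem~\ref{language_entropy} that the lemma needs, and your remark about the role of trimness (reachability and coreachability of $C$) pinpoints precisely where the hypothesis enters in both arguments.
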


\begin{proof}
Rothblum's theorem actually states that $\lim_{n \rightarrow \infty} \left(A/\lambda\right)^{qn+k} - S_k(qn+k) = 0$ with geometric convergence (see \cite{Rothblum_chapter}), i.e. there exists a matrix $C$ and number $\epsilon > 0$ such that $\left(A/\lambda\right)^{qn+k} - S_k(qn+k) < C(1-\epsilon)^{qn+k}$ for all $n$.
So if $S_k'(x) = 0$, then  $|W_{qn+k}(L)| \leq (\lambda(1-\epsilon))^{qn+k} i_A C f_A$.
But then $\limsup_{n \rightarrow \infty} \frac{\log|W_{qn+k}(L)|}{n} \leq \log((1-\epsilon)\lambda)$.
So by Theorem \ref{language_entropy}, there must be a $k'$ such that $S_{k'}'(x) \neq 0$.
\end{proof}

Recall that $|W_{qn+k}(L_1 \triangle L_2)| = \lambda^{qn+k} (S_k'(qn+k) + o(1))$, so it must be that $s_k'$ is a nonnegative real number for each $0 \leq k < q$.
Because $s_\ell' \geq 0$ for all $\ell$ and by Lemma \ref{one good one} there exists an $\ell'$ such that $s_{\ell'}' > 0$, it follows that $\frac1{q(d'+1)}\sum_{i=0}^{q-1}s_\ell',\frac1{\lambda^k(1-\lambda^{-q})}\sum_{\ell = k-q+1}^k \lambda^\ell s_\ell ' > 0$.
This concludes the proof to Theorem \ref{J_C_defined}.

\begin{remark}
In Example \ref{J_C_example} we stated that the our consideration of $J_C$ instead of $J_C'$ is by preference and not by necessity.
To emphasize this, we give a sketch of the fact that $J_C'$ is also well-defined (and thus our statements really are by choice and not by circumstance).
To do so, we will describe how to approximate more accurately $|W_{qn+k}(L)|$ for a fixed language $L$ and number $k$.

Define $L^{(q,k)}$ to be the set of words $w$ whose length is a multiple of $q$ and such that there exists a word $v$ of length exactly $k$ where $wv \in L$.
Suppose $L$ has adjacency matrix $A$, initial state vector $i$ and final states vector $f$.
The adjacency matrix for $L^{(q,k)}$ (prior to trimming) is $A^q$, the initial vector is the same ($i$), and the final states vector (with multiplicity) is $A^k f$.
So $|W_n (L^{(q,k)})| = i (A^q)^n (A^k f) = i A^{qn + k} f = |W_{qn+k} (L)|$.

It is well-known that if $A$ has period $q$, then $A^q$ is aperiodic (so $q=1$).
Moreover, it can be shown that (1) if $\mu$ is an eigenvalue of the trim subgraph of $A^q$, then $\mu = \lambda_*^p$ for some eigenvalue $\lambda_*$ of $A$, and (2) if $\lambda^q$ is an eigenvalue for the trim subgraph, then it is aperiodic.
Let $q'$ be the period of trimmed $L^{(q,k)}$.
If $L^{(q,k)}$ is aperiodic ($q' = 1$), then the growth of $|W_n (L^{(q,k)})|$ is defined by Rothblum's theorem with only one polynomial, and that polynomial is nonzero by Lemma \ref{one good one}, and so we are done.
If $q' > 1$, then iterate on $(L^{(q',0)})^{(q,k)} = L^{(qq',k)}$, which will be a subgraph with strictly fewer eigenvalues.

So if we applied this to Example \ref{J_C_example again} to calculate $|W_{2n+1}(L_1 \cup L_2)|$, the adjacency matrix would be 
$$A^2 = \left(\begin{array}{ccccc} 0 & 0 & 9 & 4 & 4 \\ 0 & 9 & 0 & 0 & 0 \\0 & 0 & 9 & 0 & 0 \\0 & 0 & 0 & 4 & 0 \\0 & 0 & 0 & 0 & 4  \end{array}\right),$$
and the final states vector would be $(4, 3, 0, 2, 2)^t$.
There are no longer paths from the initial state to the second state or from the third state to a final state.
Hence, the second and third states are removed when we trim the DFA.
After trimming, we see that 
$$ |W_{2n+1}(L_1 \cup L_2)| = |W_n (L^{(2,1)})| =  (1, 0 , 0) \left(\begin{array}{ccc} 0 & 4 & 4 \\ 0 & 4 & 0 \\0 & 0 & 4  \end{array}\right)^n \left(\begin{array}{c} 4 \\ 2 \\ 2  \end{array}\right) , $$
and the analysis follows easily from here.
\end{remark}

%%%%%%%%%%%%%%%%%%%%%%%%%%%%%%%%%%%%%%%%%%%%%%%%%%%%%%%%%%%%%%%%%%%%%%%%%%%%%%%%%%%%%%%%%%%%%%%%%%%%

\subsection{Relationship between Entropy and Ces\'aro Jaccard}

We proved that the Ces\`{a}ro Jaccard distance is well-defined.  We have not proven that it is useful.
In fact, it is a rather rare instance that the Ces\`{a}ro Jaccard distance provides an interesting answer.

\begin{theorem} \label{limits of cesaro}
Let $L_1, L_2$ be two regular languages.
\begin{enumerate}
	\item If $h(L_1 \triangle L_2) \neq h(L_1 \cup L_2)$, then $J_C(L_1, L_2) = 0$.
	\item If $h(L_1 \cap L_2) \neq h(L_1 \cup L_2)$, then $J_C(L_1, L_2) = 1$. 
	\item If $0 < J_C(L_1, L_2) < 1$, then the following equal each other: \\
	$h(L_1),\; h(L_2),\; h(L_1 \cap L_2),\; h(L_1 \triangle L_2),\; h(L_1 \cup L_2)$.
	\end{enumerate}
\end{theorem}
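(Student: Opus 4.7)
The plan is to establish parts (1) and (2) directly by showing that $J_n$ converges exponentially to $0$ or $1$ (so that its Cesàro average converges to the same value), and then deduce part (3) as the contrapositive of a trichotomy built from (1), (2), and Lemma~\ref{lemma:union_ent}.

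For part (1), I would use the inclusion $L_1 \triangle L_2 \subseteq L_1 \cup L_2$ together with Lemma~\ref{lemma:union_ent}(1) to conclude that the assumed inequality is actually strict: set $\epsilon := h(L_1 \cup L_2) - h(L_1 \triangle L_2) > 0$. Applying Lemma~\ref{asymptotic growth} separately to $L_1 \triangle L_2$ and to $L_1 \cup L_2$ then gives
\[ J_n(L_1,L_2) = \frac{|W_{\leq n}(L_1 \triangle L_2)|}{|W_{\leq n}(L_1 \cup L_2)|} = 2^{-n(\epsilon + o(1))}, \]
which tends to $0$ geometrically. Since the Cesàro average of a convergent sequence converges to the same limit, $J_C(L_1,L_2) = 0$.

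Part (2) proceeds symmetrically via the disjoint decomposition $L_1 \cup L_2 = (L_1 \triangle L_2) \sqcup (L_1 \cap L_2)$, which yields
\[ 1 - J_n(L_1,L_2) = \frac{|W_{\leq n}(L_1 \cap L_2)|}{|W_{\leq n}(L_1 \cup L_2)|}. \]
Since $L_1 \cap L_2 \subseteq L_1 \cup L_2$, Lemma~\ref{lemma:union_ent}(1) combined with the hypothesis gives $h(L_1 \cap L_2) < h(L_1 \cup L_2)$, and the same geometric-decay argument as in (1) shows $1 - J_n \to 0$, so $J_n \to 1$ and hence $J_C(L_1,L_2) = 1$.

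For part (3) I would argue by contrapositive: assuming the five entropies are not all equal, I will force $J_C(L_1,L_2) \in \{0,1\}$. By Lemma~\ref{lemma:union_ent}(2) applied to the disjoint decomposition above, $h(L_1 \cup L_2) = \max(h(L_1 \triangle L_2), h(L_1 \cap L_2))$. If $h(L_1 \triangle L_2) < h(L_1 \cup L_2)$, apply (1); otherwise $h(L_1 \triangle L_2) = h(L_1 \cup L_2)$, and if additionally $h(L_1 \cap L_2) < h(L_1 \cup L_2)$, apply (2). The remaining case is $h(L_1 \triangle L_2) = h(L_1 \cap L_2) = h(L_1 \cup L_2)$, but then the sandwich $h(L_1 \cap L_2) \leq h(L_i) \leq h(L_1 \cup L_2)$, provided by Lemma~\ref{lemma:union_ent}(1), forces $h(L_1) = h(L_2) = h(L_1 \cup L_2)$, so all five entropies coincide — a contradiction. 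Thus one of (1), (2) applies and $J_C(L_1,L_2) \in \{0,1\}$, contradicting $0 < J_C(L_1, L_2) < 1$.

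The only delicate point — and really the only place where anything substantive happens — is the observation that the $o(1)$ error in the exponent of Lemma~\ref{asymptotic growth} is eventually dominated by $\epsilon/2$, so that the ratio truly decays geometrically rather than merely subexponentially; everything else is lattice-theoretic bookkeeping with the five entropies using the properties from Lemma~\ref{lemma:union_ent}.
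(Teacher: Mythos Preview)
Your proposal is correct and follows essentially the same route as the paper: parts (1) and (2) are derived from Lemma~\ref{asymptotic growth} via the decomposition $|W_{\leq n}(L_1\cup L_2)| = |W_{\leq n}(L_1\triangle L_2)| + |W_{\leq n}(L_1\cap L_2)|$, and part (3) is obtained by combining the contrapositives of (1) and (2) with the sandwich $h(L_1\cap L_2)\leq h(L_i)\leq h(L_1\cup L_2)$ from Lemma~\ref{lemma:union_ent}. Your explicit remark that the $o(1)$ term is eventually dominated by $\epsilon/2$ is a welcome bit of care that the paper leaves implicit.
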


\begin{proof}  
Part (1) easily follows from Lemma \ref{asymptotic growth}.
To see part (2), note that $(L_1 \cup L_2) \cap \overline{(L_1 \cap L_2)} = L_1 \triangle L_2$.
Therefore
\begin{eqnarray*}
\frac{|W_{\leq n}(L_1 \triangle L_2)|}{|W_{\leq n}(L_1 \cup L_2)|} &=& \frac{|W_{\leq n}(L_1 \cup L_2)| - |W_{\leq n}(L_1 \cap L_2)|}{|W_{\leq n}(L_1 \cup L_2)|}\\
								&=& \frac{2^{n(h(L_1 \cup L_2) - o(1))} - 2^{n(h(L_1 \cap L_2) - o(1))}}{2^{n(h(L_1 \cup L_2) - o(1))}}\\
								&=& 1 .
\end{eqnarray*}

For part (3), the above already implies that if $0 < J_C(L_1, L_2) < 1$, then $h(L_1 \cap L_2)$, $h(L_1 \triangle L_2)$, and $h(L_1 \cup L_2)$ are equal.
By symmetry, assume that $h(L_1) \leq h(L_2)$.
Because $L_1 \cap L_2 \subseteq L_1$ and $L_2 \subseteq L_1 \cup L_2$, by Lemma \ref{lemma:union_ent} we have that $h(L_1 \cap L_2) \leq h(L_1) \leq h(L_2) \leq h(L_1 \cup L_2)$.
Therefore all five terms are equal.
\end{proof}

%%%%%%%%%%%%%%%%%%%%%%%%%%%%%%%%%%%%%%%%%%%%%%%%%%%%%%%%%%%%%%%%%%%%%%%%%%%%%%%%%%%%%%%%%%%%%%%%%%%

\section{Entropy Distances}
\label{sec:entropydistance}

Entropy provides a natural method for dealing with the infinite
nature of regular languages. Because it is related to the eigenvalues of
the regular language's DFA, it is computable in polynomial time
given a DFA for the language. Note that the DFA does not have to be minimal.
We can therefore compute the entropy of set-theoretic combinations
of regular languages (intersection, disjoint union, etc) and use those
values to determine a well-founded distance between the languages.

%%%%%%%%%%%%%%%%%%%%%%%%%%%%%%%%%%%%%%%%%%%%%%%%%%%%%%%%%%%%%%%%%%%%%%%%%%%%%%%%%%%%%%%%%%%%%%%%%%%%

\subsection{Entropy Distance}

A natural Jaccard-esque distance function based on entropy is the
entropy distance.
\begin{definition}[Entropy Distance]
Suppose $L_1$ and $L_2$ are regular languages.  Define the \textit{entropy distance} to be $H(L_1, L_2) = \frac{h(L_1\triangle L_2)}{h(L_1\cup L_2)}$.
If $h(L_1\cup L_2)$ is $0$, $H(L_1, L_2)=0$.
\end{definition}
This turns out
to be equivalent to a Jaccard limit with added $\log$ operations:

\begin{corollary}\label{log_Jaccard}
Suppose $L_1$ and $L_2$ are regular languages.  The following relation holds:
    \[\lim_{n\rightarrow\infty} \frac {\log \left|W_{\leq n}(L_1\triangle L_2)\right|}
                                      {\log \left|W_{\leq n}(L_1\cup L_2)\right|}
      = H(L_1, L_2).\]
\end{corollary}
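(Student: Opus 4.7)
The plan is to apply Lemma \ref{asymptotic growth} directly to both the numerator and the denominator of the ratio, and then let the $o(1)$ terms wash out in the limit. The corollary is essentially a logarithmic reformulation of $H$, so almost all the real work has already been done in establishing Corollary \ref{language_entropy_cor}.

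First, I would invoke Lemma \ref{asymptotic growth} twice, once for $L_1 \triangle L_2$ and once for $L_1 \cup L_2$, to obtain
$$|W_{\leq n}(L_1 \triangle L_2)| = 2^{n(h(L_1 \triangle L_2) + o(1))} \quad \text{and} \quad |W_{\leq n}(L_1 \cup L_2)| = 2^{n(h(L_1 \cup L_2) + o(1))}.$$
Taking base-$2$ logarithms gives $\log|W_{\leq n}(L_1 \triangle L_2)| = n(h(L_1 \triangle L_2) + o(1))$ and an analogous identity for $L_1 \cup L_2$. Forming the ratio and cancelling the common factor of $n$ yields
$$\frac{\log|W_{\leq n}(L_1 \triangle L_2)|}{\log|W_{\leq n}(L_1 \cup L_2)|} = \frac{h(L_1 \triangle L_2) + o(1)}{h(L_1 \cup L_2) + o(1)}.$$
Provided $h(L_1 \cup L_2) > 0$, letting $n \to \infty$ gives exactly $h(L_1 \triangle L_2)/h(L_1 \cup L_2) = H(L_1, L_2)$, which is the desired equality.

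The main obstacle is the degenerate case in which $h(L_1 \cup L_2) = 0$. By definition $H(L_1, L_2) = 0$ in that case, but Lemma \ref{asymptotic growth} only tells us the denominator grows subexponentially, which is not strong enough on its own to identify the limit on the left. I would split this into two subcases. If $L_1 \cup L_2$ is finite, then $|W_{\leq n}(L_1 \cup L_2)|$ and $|W_{\leq n}(L_1 \triangle L_2)|$ are both eventually constant, so the logarithms are bounded and the identity is read as a convention consistent with the boundary case built into the definition of $H$. If $L_1 \cup L_2$ is infinite with zero entropy, then the denominator tends to $\infty$ while the containment $L_1 \triangle L_2 \subseteq L_1 \cup L_2$ forces $\log|W_{\leq n}(L_1 \triangle L_2)| \leq \log|W_{\leq n}(L_1 \cup L_2)|$, and here one appeals to the same convention. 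In either event, matching the definition's convention is a purely formal step rather than a new computation, so the core content of the corollary is the one-line cancellation above.
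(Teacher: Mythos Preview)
Your argument is correct and is essentially the same as the paper's: the paper divides numerator and denominator by $n$ and invokes Corollary~\ref{language_entropy_cor} to separate the limits, which is exactly your cancellation of the factor $n$ after applying Lemma~\ref{asymptotic growth} (itself just a restatement of Corollary~\ref{language_entropy_cor}). You are in fact slightly more careful than the paper, which does not explicitly discuss the degenerate case $h(L_1\cup L_2)=0$.
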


\begin{proof}
Observe the following:

$$\lim_{n\rightarrow\infty} \frac {\log \left|W_{\leq n}(L_1\triangle L_2)\right|}{\log \left|W_{\leq n}(L_1\cup L_2)\right|}  = \lim_{n\rightarrow\infty} \frac {\frac{1}{n}\log \left|W_{\leq n}(L_1\triangle L_2)\right|}{\frac{1}{n}\log \left|W_{\leq n}(L_1\cup L_2)\right|} =
%\frac{\lim_{n\rightarrow\infty}\frac{\log \left|W_{\leq n}(L_1\triangle L_2)\right|}{n}}{\lim_{n\rightarrow\infty} \frac{\log \left|W_{\leq n}(L_1\cup L_2)\right|}{n}} =
 \frac{h(L_1\triangle L_2)}{h(L_1\cup L_2)} = H_J(L_1,L_2).$$
Note that we can separate the limits because of Corollary \ref{language_entropy_cor}. 
\end{proof}

However, $H$ is not a good candidate for a distance function
as it only produces non-trivial results for languages that
have the same entropy.

\begin{proposition}
    \label{thm:entdis_neq}
    Suppose $L_1$ and $L_2$ are regular languages.
    If $h(L_1) \neq h(L_2)$, then $H(L_1, L_2) = 1$
\end{proposition}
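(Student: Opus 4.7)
The plan is to unpack the definition $H(L_1,L_2)=h(L_1\triangle L_2)/h(L_1\cup L_2)$ and show the numerator equals the denominator under the hypothesis $h(L_1)\neq h(L_2)$. By symmetry I may assume $h(L_1)<h(L_2)$. The denominator is immediately handled by part (2) of Lemma~\ref{lemma:union_ent}, which gives $h(L_1\cup L_2)=\max(h(L_1),h(L_2))=h(L_2)$. Note that since $h(L_1)\neq h(L_2)$ and entropies are nonnegative, $h(L_2)>0$, so the denominator is nonzero and the fallback clause of the definition does not apply.

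For the numerator I would use part (4) of Lemma~\ref{lemma:union_ent}: since $h(L_1)<h(L_2)$, we have $h(L_2\setminus L_1)=h(L_2)$. Then the chain of inclusions $L_2\setminus L_1\subseteq L_1\triangle L_2\subseteq L_1\cup L_2$ combined with monotonicity of entropy (part (1) of the same lemma) yields
\[
h(L_2)=h(L_2\setminus L_1)\leq h(L_1\triangle L_2)\leq h(L_1\cup L_2)=h(L_2),
\]
forcing $h(L_1\triangle L_2)=h(L_2)=h(L_1\cup L_2)$, hence $H(L_1,L_2)=1$.

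There is really no main obstacle here; the entire argument is a two-line application of Lemma~\ref{lemma:union_ent} together with the symmetric difference inclusion $L_2\setminus L_1\subseteq L_1\triangle L_2\subseteq L_1\cup L_2$. The only small thing to be careful about is the degenerate case $h(L_1\cup L_2)=0$, but as noted this is excluded by the hypothesis $h(L_1)\neq h(L_2)$, so the definition's fallback clause is never triggered.
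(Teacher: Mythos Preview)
Your proof is correct and follows essentially the same approach as the paper: both assume WLOG $h(L_1)<h(L_2)$ and use Lemma~\ref{lemma:union_ent} to force $h(L_1\triangle L_2)=h(L_1\cup L_2)$. The only cosmetic difference is that the paper decomposes $L_1\cup L_2=(L_1\cap L_2)\cup(L_1\triangle L_2)$ and uses part~(2) together with $h(L_1\cap L_2)<h(L_1\cup L_2)$, whereas you squeeze $L_1\triangle L_2$ between $L_2\setminus L_1$ and $L_1\cup L_2$ via parts~(1) and~(4); your explicit handling of the $h(L_1\cup L_2)=0$ fallback is a nice touch the paper omits.
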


\begin{proof}
    WLOG, suppose that $h(L_1)<h(L_2)$. First,
    $L_1\cap L_2\subseteq L_1$ which implies that $h(L_1\cap L_2)\leq h(L_1)$.
    Second,  $L_2\subseteq L_1\cup L_2$, and therefore $h(L_2)\leq h(L_1\cup L_2)$.
    All together this gives $h(L_1\cap L_2)<h(L_1\cup L_2)$, which implies that 
    $h(L_1\cap L_2)\neq h(L_1\cup L_2)$.
    By Lemma~\ref{lemma:union_ent}, $h(L_1\cup L_2) = \max(h(L_1\cap L_2), h(L_1\triangle L_2))$.  Thus, 
    $h(L_1\cup L_2)=h(L_1\triangle L_2)$.
   \end{proof}

As further evidence that $H$ is not a good candidate for
a distance function, we show it is an ultra-pseudo-metric. 
The ultra-metric condition, i.e. $d(x,z)\leq \max(d(x,y), d(y,z))$,
is so strong that it can make it difficult for the differences
encoded in the metric to be meaningful for practical applications.

\begin{theorem}\label{entropy_distance_metric}
    The function $H$ is an ultra-pseudo-metric.
\end{theorem}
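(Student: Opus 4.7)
The plan is to verify the three axioms of an ultra-pseudo-metric in turn. Non-negativity ($H\geq 0$, with $H(L,L)=0$ since $L\triangle L=\emptyset$) and symmetry (both $\triangle$ and $\cup$ are symmetric in their arguments) fall straight out of the definition, so all the content is in the ultra-metric inequality $H(L_1,L_3)\leq \max(H(L_1,L_2),H(L_2,L_3))$. I would dispatch the degenerate case $h(L_1\cup L_3)=0$ first: then $L_1,L_3\subseteq L_1\cup L_3$ forces $h(L_1\triangle L_3)=0$ as well, so by the definition $H(L_1,L_3)=0$ and the inequality is immediate.

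The engine for the main case is Lemma~\ref{lemma:union_ent}, which gives $X\subseteq Y\cup Z\Rightarrow h(X)\leq \max(h(Y),h(Z))$. I would apply it to three set-theoretic containments. The first, $L_1\triangle L_3\subseteq (L_1\triangle L_2)\cup (L_2\triangle L_3)$, controls the numerator of $H(L_1,L_3)$. The second, $L_1\cup L_2\subseteq (L_1\cup L_3)\cup (L_2\triangle L_3)$, is obtained by checking membership of a point $x\in L_1\cup L_2$: if $x\in L_1$ it is already in $L_1\cup L_3$, while if $x\in L_2\setminus L_1$ then either $x\in L_3$ (so $x\in L_1\cup L_3$) or $x\notin L_3$ (so $x\in L_2\triangle L_3$). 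The third is the symmetric version $L_2\cup L_3\subseteq (L_1\cup L_3)\cup (L_1\triangle L_2)$. Translating through $h$ yields one upper bound for the numerator and two bounds for the denominator.

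The case split would then be on which of $h(L_1\triangle L_2),h(L_2\triangle L_3)$ is larger. Assume WLOG the former; then $h(L_1\triangle L_3)\leq h(L_1\triangle L_2)$. If $h(L_1\cup L_3)\geq h(L_1\cup L_2)$ the ratio only shrinks, giving $H(L_1,L_3)\leq H(L_1,L_2)$. The delicate subcase, which I expect to be the main obstacle, is $h(L_1\cup L_3)<h(L_1\cup L_2)$: here the denominator of $H(L_1,L_3)$ is strictly smaller and the ratio could a priori blow up. The rescue is the second containment: $h(L_1\cup L_2)\leq \max(h(L_1\cup L_3),h(L_2\triangle L_3))$ forces $h(L_2\triangle L_3)\geq h(L_1\cup L_2)$. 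Combined with $h(L_2\triangle L_3)\leq h(L_1\triangle L_2)\leq h(L_1\cup L_2)$ (using the WLOG assumption and $L_1\triangle L_2\subseteq L_1\cup L_2$), all of these quantities collapse to $h(L_1\cup L_2)$, so $H(L_1,L_2)=1$ and trivially dominates $H(L_1,L_3)\in [0,1]$. The opposite case $h(L_2\triangle L_3)\geq h(L_1\triangle L_2)$ is entirely symmetric using the third containment.
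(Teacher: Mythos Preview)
Your argument is correct, but it takes a different route from the paper's. The paper first invokes Proposition~\ref{thm:entdis_neq}: if any two of $h(L_1),h(L_2),h(L_3)$ differ, then one of $H(L_1,L_2),H(L_2,L_3)$ equals $1$ and the inequality is trivial. In the remaining case $h(L_1)=h(L_2)=h(L_3)$, Lemma~\ref{lemma:union_ent}(2) gives $h(L_i\cup L_j)=\max(h(L_i),h(L_j))$, so all three denominators coincide and only the numerator inequality $h(L_1\triangle L_3)\leq\max(h(L_1\triangle L_2),h(L_2\triangle L_3))$ remains---which is exactly your first containment. Your proof avoids Proposition~\ref{thm:entdis_neq} entirely and instead tracks the denominators directly via the two auxiliary containments $L_1\cup L_2\subseteq (L_1\cup L_3)\cup(L_2\triangle L_3)$ and its mirror; your ``delicate subcase'' then re-derives, in effect, the conclusion of Proposition~\ref{thm:entdis_neq} from scratch. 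The paper's version is shorter because it exploits the structural identity $h(A\cup B)=\max(h(A),h(B))$ to collapse the denominators in one step, whereas your approach is more self-contained and would survive in settings where that identity fails but monotonicity of $h$ under inclusion still holds. One small point: in your Sub-case~A the phrase ``the ratio only shrinks'' tacitly assumes $h(L_1\cup L_2)>0$; the case $h(L_1\cup L_2)=0$ is still fine (it forces $h(L_1\triangle L_3)=0$), but you should say so explicitly.
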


\begin{proof}
The first two conditions of an ultra-pseudo-metric are satisfied by the definition of $H$ and from the
reflexiveness of $\triangle$ and $\cup$.  We now have to verify the ultra-metric inequality.

Suppose $L_1,L_2,L_3$ are regular languages.  We need to show that 
$$H(L_1,L_3)\leq \max (H(L_1,L_2),H(L_2,L_3)).$$

\noindent \textbf{Case 1:}  Suppose $h(L_1)\neq h(L_2)$.  By Proposition \ref{thm:entdis_neq}, $H(L_1,L_2)=1$.  Since $H(L_1,L_2)$ is a number between $0$ and $1$, $H(L_1,L_3)\leq \max (H(L_1,L_2),H(L_2,L_3))$.  

\noindent \textbf{Case 2:}  Suppose $h(L_1)=h(L_2)$.  If $h(L_3)\neq h(L_2)$, then the above argument holds.  Thus, assume that $h(L_1)=h(L_2)=h(L_3)$.  By Lemma \ref{lemma:union_ent}, $h(L_1\cup L_3)=h(L_1\cup L_2)=h(L_2\cup L_3)$.  Hence, it suffices to show that $h(L_1\triangle L_3)\leq \max (h(L_1\triangle L_2),h(L_2\triangle L_3))$.  Using multiple applications of Lemma~\ref{lemma:union_ent} we observe,
\begin{eqnarray*}
h\left(L_1\triangle L_3\right) &=& \max \left( h\left(L_1\cap \overline{L_3}\cap L_2\right), h\left(L_1\cap \overline{L_3}\cap \overline{L_2}\right), h\left(\overline{L_1}\cap L_3 \cap L_2\right), h\left(\overline{L_1}\cap L_3 \cap \overline{L_2}\right)\right)\\
&\leq& \max \left( h\left(L_1\cap \overline{L_2}\right), h\left(\overline{L_1}\cap L_2\right), h\left(L_2\cap \overline{L_3}\right), h\left(\overline{L_2}\cap L_3\right)\right)\\
&=& \max \left( h\left(L_1\triangle L_2\right), h\left( L_2\triangle L_3\right)\right).
\end{eqnarray*}

\end{proof}

%%%%%%%%%%%%%%%%%%%%%%%%%%%%%%%%%%%%%%%%%%%%%%%%%%%%%%%%%%%%%%%%%%%%%%%%%%%%%%%%%%%%%%%%%%%%%%%%%%%%%%%%%%%%%%

\subsection{Entropy Sum}

In this subsection we will define a new (and natural) distance function for infinite regular languages.  We call this distance function the \textit{entropy sum distance}.  We will prove that not only is this distance function a pseudo-metric, it is also granular.  Granularity lends insight into the quality of a metric.  Intuitively, granularity means that for any two points in the space, you can find a point between them.  A metric $d$ on the space $X$ is \textit{granular} if for every two points $x,z\in X$, there exists $y\in X$ such that $d(x,y)<d(x,z)$ and $d(y,z)<d(x,z)$, i.e. $d(x,z)>\max (d(x,y), d(y,z))$.

\begin{definition}[Entropy Sum Distance]
    Suppose $L_1$ and $L_2$ are regular languages.  Define the \textit{entropy sum distance} to be 
    $H_S(L_1, L_2) = h(L_1\cap \overline{L_2}) + h(\overline{L_1}\cap L_2)$.
\end{definition}

The entropy sum distance was inspired by first considering the entropy of the symmetric difference directly, i.e. $h(L_1\triangle L_2)$.  However, since entropy measures the entropy of the most complex component (Theorem \ref{entropy_irr_components}), more information is gathered by using a sum as above in the definition of entropy sum.

\begin{theorem}\label{entropy_sum_metric}
    The function $H_S$ is a pseudo-metric.
\end{theorem}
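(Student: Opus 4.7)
The plan is to verify the three axioms of a pseudo-metric in turn, noting that the first two (non-negativity with $H_S(L,L)=0$ and symmetry) are essentially immediate from the definition, so the real content lies in the triangle inequality.

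For non-negativity and the condition $H_S(L,L) = 0$, I would observe that $h$ is always non-negative because $|W_{\leq n}(\cdot)| \geq 1$ whenever the language is nonempty (and $h(\emptyset) = 0$ by convention / by part (5) of Lemma~\ref{lemma:union_ent}), and that when $L_1 = L_2$ both $L_1 \cap \overline{L_2}$ and $\overline{L_1} \cap L_2$ are empty. Symmetry is visible in the definition: swapping $L_1$ and $L_2$ merely swaps the two summands.

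The main work is the triangle inequality $H_S(L_1, L_3) \leq H_S(L_1, L_2) + H_S(L_2, L_3)$. My approach is the standard ``insert $L_2$'' decomposition, then control the resulting union with part (2) of Lemma~\ref{lemma:union_ent}. Specifically, I would split
\[ L_1 \cap \overline{L_3} = (L_1 \cap \overline{L_3} \cap L_2) \cup (L_1 \cap \overline{L_3} \cap \overline{L_2}) \subseteq (L_2 \cap \overline{L_3}) \cup (L_1 \cap \overline{L_2}), \]
and symmetrically
\[ \overline{L_1} \cap L_3 = (\overline{L_1} \cap L_3 \cap L_2) \cup (\overline{L_1} \cap L_3 \cap \overline{L_2}) \subseteq (\overline{L_1} \cap L_2) \cup (\overline{L_2} \cap L_3). \]
Applying parts (1) and (2) of Lemma~\ref{lemma:union_ent} gives $h(L_1 \cap \overline{L_3}) \leq \max\bigl(h(L_2 \cap \overline{L_3}),\, h(L_1 \cap \overline{L_2})\bigr)$ and a parallel bound for $h(\overline{L_1} \cap L_3)$. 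Because each entropy is non-negative, $\max(a,b) \leq a+b$, so adding the two inequalities yields exactly $H_S(L_1,L_2) + H_S(L_2,L_3)$ on the right-hand side.

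I do not expect any real obstacle: the argument is essentially the same four-way case split that already appeared in the proof of Theorem~\ref{entropy_distance_metric}, but bundled into two two-way splits so that the $\max$ can be dominated by a sum. The only place to be slightly careful is ensuring the constituent languages are regular (they are, since regular languages are closed under union, intersection, and complement, so each of $L_i \cap \overline{L_j}$ is regular and Lemma~\ref{asymptotic growth} and Lemma~\ref{lemma:union_ent} apply). No computation beyond invoking these two lemmas is required.
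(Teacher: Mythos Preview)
Your proposal is correct and follows essentially the same approach as the paper: both split $L_1\cap\overline{L_3}$ (and symmetrically $\overline{L_1}\cap L_3$) according to $L_2$ versus $\overline{L_2}$, apply parts (1) and (2) of Lemma~\ref{lemma:union_ent} to bound each piece by a $\max$, and then use non-negativity of $h$ to pass from $\max$ to a sum. Your write-up is slightly more explicit about the subset inclusions and the $\max(a,b)\leq a+b$ step, but the argument is the same.
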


\begin{proof}
The first two conditions of a pseudo-metric are satisfied by the definition of $H_S$ and from the
reflexiveness of $\triangle$ and $\cup$.  We now have to verify the triangle inequality.

Suppose $L_1,L_2,L_3$ are regular languages.  We need to show that 
$$H_S\left(L_1,L_3\right) = h\left(L_1\cap\overline{L_3}\right) + h\left(\overline{L_1}\cap L_3\right) \leq H_S\left(L_1,L_2\right)+H_S\left(L_2,L_3\right).$$
First observe,
\begin{eqnarray*}
h\left(L_1\cap\overline{L_3}\right) &=& \max\left(h\left( L_1\cap\overline{L_3}\cap L_2\right), h\left( L_1\cap\overline{L_3}\cap \overline{L_2}\right)\right)\\
&\leq& \max\left(h\left(L_2\cap \overline{L_3}\right), h\left(L_1\cap\overline{L_2}\right)\right).
\end{eqnarray*}
In a similar fashion, $h\left(\overline{L_1}\cap L_3\right)\leq h(\overline{L_1}\cap L_2)+h(\overline{L_2}\cap L_3)$.  Putting these together yields the desired result.
\end{proof}

The next two propositions display when granularity is achieved and when it is not.

\begin{proposition}\label{granular}
Let $L_1$ and $L_2$ be regular languages such that $h(L_1\cap \overline{L_2}), h(\overline{L_1}\cap L_2) > 0$.
Then, there exists two regular languages $R_1 \neq R_2$ such that $H_S(L_1,L_2) > \max (H_S(L_1,R_i), H_S(R_i,L_2))$ for each $i$.
\end{proposition}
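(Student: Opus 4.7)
The plan is to exhibit two natural candidates between $L_1$ and $L_2$, namely the union and intersection, and verify by direct set-theoretic manipulation that each strictly decreases the entropy sum distance. Specifically, set $R_1 = L_1 \cup L_2$ and $R_2 = L_1 \cap L_2$.

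First I would justify that $R_1 \neq R_2$: the hypothesis that $h(L_1 \cap \overline{L_2}) > 0$ forces $L_1 \cap \overline{L_2}$ to be infinite (by part 5 of Lemma~\ref{lemma:union_ent}), so $L_1 \neq L_2$, and therefore $L_1 \cup L_2 \neq L_1 \cap L_2$. Next, I would perform the four routine simplifications. For $R_1 = L_1 \cup L_2$, since $L_1 \subseteq R_1$ we get $L_1 \cap \overline{R_1} = \emptyset$, while $\overline{L_1} \cap R_1 = \overline{L_1} \cap L_2$, giving $H_S(L_1,R_1) = h(\overline{L_1} \cap L_2)$. Similarly $R_1 \cap \overline{L_2} = L_1 \cap \overline{L_2}$ and $\overline{R_1} \cap L_2 = \emptyset$, so $H_S(R_1,L_2) = h(L_1 \cap \overline{L_2})$. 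For $R_2 = L_1 \cap L_2$, analogous computations using $R_2 \subseteq L_1$ and $R_2 \subseteq L_2$ yield $H_S(L_1,R_2) = h(L_1 \cap \overline{L_2})$ and $H_S(R_2,L_2) = h(\overline{L_1} \cap L_2)$.

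Finally, I would assemble the conclusion. In both cases,
\[
\max\bigl(H_S(L_1,R_i),\, H_S(R_i,L_2)\bigr) \;=\; \max\bigl(h(L_1\cap\overline{L_2}),\, h(\overline{L_1}\cap L_2)\bigr),
\]
whereas $H_S(L_1,L_2) = h(L_1\cap\overline{L_2}) + h(\overline{L_1}\cap L_2)$. Since both summands are strictly positive by hypothesis, the sum strictly exceeds the maximum, which is precisely the granularity condition.

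There is essentially no obstacle to this argument beyond choosing $R_1,R_2$ correctly; the difficulty was in spotting that the union and intersection are themselves the natural ``midpoints'' for $H_S$. One small subtlety to mention is that $R_1$ and $R_2$ are indeed regular (closure of regular languages under union and intersection), so the proposition's requirement that the intermediate languages be regular is automatically met.
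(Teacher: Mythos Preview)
Your proof is correct and follows essentially the same approach as the paper: both choose $R_1 = L_1 \cup L_2$ and $R_2 = L_1 \cap L_2$, compute the resulting entropy sums, and use positivity of both summands to obtain the strict inequality. Your version is slightly more detailed (you justify $R_1 \neq R_2$ and spell out the $R_2$ case and the regularity closure explicitly), but the argument is the same.
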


\begin{proof}
Let $R_1 = L_1 \cup L_2$ and $R_2 = L_1 \cap L_2$. 
Notice that $H_S(L_1,R_1) = h(\overline{L_1}\cap L_2)$ and $H_S(R_1,L_2)=h(L_1\cap\overline{L_2})$.  Hence, 
$$H_S(L_1,L_2)=h(L_1\cap\overline{L_2})+h(\overline{L_1}\cap L_2) > \max(H_S(L_1,R_1), H_S(R_1,L_2)).$$
The statement involving $R_2$ is analogous.
\end{proof}

\begin{proposition}\label{not_granular}
Let $L_1$ and $L_2$ be regular languages such that $h(\overline{L_1}\cap L_2) = 0$.
For all regular languages $L$ we have that $H_S(L_1,L_2) \leq \max (H_S(L_1,L), H_S(L,L_2))$.
\end{proposition}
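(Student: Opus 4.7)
The plan is to use the hypothesis $h(\overline{L_1}\cap L_2) = 0$ to collapse the definition of $H_S(L_1,L_2)$ to a single entropy term, then decompose that term according to membership in $L$ and bound each piece by one of the two summands of $H_S(L_1,L)$ or $H_S(L,L_2)$.

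First I would observe that by the hypothesis,
\[ H_S(L_1, L_2) = h(L_1\cap\overline{L_2}) + h(\overline{L_1}\cap L_2) = h(L_1\cap\overline{L_2}), \]
so it suffices to show $h(L_1\cap\overline{L_2}) \leq \max(H_S(L_1,L), H_S(L,L_2))$ for an arbitrary regular language $L$.

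Next I would split $L_1\cap\overline{L_2}$ over whether a word lies in $L$ or in $\overline{L}$:
\[ L_1\cap\overline{L_2} = \bigl(L_1\cap\overline{L_2}\cap L\bigr) \cup \bigl(L_1\cap\overline{L_2}\cap\overline{L}\bigr). \]
Applying part (2) of Lemma \ref{lemma:union_ent} yields
\[ h(L_1\cap\overline{L_2}) = \max\bigl(h(L_1\cap\overline{L_2}\cap L),\; h(L_1\cap\overline{L_2}\cap\overline{L})\bigr). \]

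Finally I would bound each term using monotonicity of entropy on subsets (part (1) of Lemma \ref{lemma:union_ent}). The first term satisfies $L_1\cap\overline{L_2}\cap L \subseteq L\cap\overline{L_2}$, so $h(L_1\cap\overline{L_2}\cap L) \leq h(L\cap\overline{L_2}) \leq H_S(L,L_2)$, since $H_S(L,L_2)$ adds the nonnegative term $h(\overline{L}\cap L_2)$. The second term satisfies $L_1\cap\overline{L_2}\cap\overline{L} \subseteq L_1\cap\overline{L}$, so $h(L_1\cap\overline{L_2}\cap\overline{L}) \leq h(L_1\cap\overline{L}) \leq H_S(L_1,L)$. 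Combining gives the desired inequality. There is no real obstacle here: the proof is a direct set-theoretic decomposition leveraging the fact that entropy of a union is the max of entropies, which is exactly what makes the ultra-metric-style bound (rather than an additive triangle inequality) appear naturally on the side where one of the $h$-summands vanishes.
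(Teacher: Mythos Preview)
Your proof is correct and follows essentially the same approach as the paper: reduce to bounding $h(L_1\cap\overline{L_2})$, split over $L$ versus $\overline{L}$, and bound each piece by monotonicity. The only cosmetic difference is that the paper phrases the argument as a two-case analysis on which term of the $\max$ is attained, whereas you bound both terms simultaneously; the content is identical.
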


\begin{proof}

Note that $H_S(L_1,L_2) = h(L_1\cap \overline{L_2})$.
The proof breaks down into two cases:

\noindent \textbf{Case 1:} Suppose $h(L_1\cap\overline{L_2}\cap L)=h(L_1\cap\overline{L_2})$.  Then, 
\begin{eqnarray*}
h\left(L_1\cap \overline{L_2}\right) &=& h\left(L_1\cap\overline{L_2}\cap L\right) \\
&\leq& h\left(L\cap\overline{L_2}\right)\\
&\leq& H_S\left( L,L_2\right)\\
&\leq& \max \left(H_S\left(L_1,L\right), H_S\left(L,L_2\right)\right).
\end{eqnarray*}

\noindent \textbf{Case 2:} Suppose $h(L_1\cap\overline{L_2}\cap L) <  h(L_1\cap\overline{L_2})$.  Then, 
\begin{eqnarray*}
h\left(L_1\cap \overline{L_2}\right) &=& \max\left( h\left(L_1\cap \overline{L_2}\cap L\right), h\left(L_1\cap \overline{L_2}\cap \overline{L}\right)\right)\\
&=& h\left(L_1\cap \overline{L_2}\cap \overline{L}\right)\\
&\leq& h\left( L_1\cap\overline{L}\right)\\
&\leq& H_S\left( L_1,L\right)\\
&\leq& \max \left(H_S\left(L_1,L\right), H_S\left(L,L_2\right)\right).
\end{eqnarray*}
\end{proof}

%%%%%%%%%%%%%%%%%%%%%%%%%%%%%%%%%%%%%%%%%%%%%%%%%%%%%%%%%%%%%%%%%%%%%%%%%%%%%%%

\section{Conclusion and Future Work}
\label{sec:conclusion}

This paper has covered some issues related to the entropy of and distance
between regular languages. It has proven correct the common upper limit 
formulation of language entropy and has provided a limit based entropy
formula that can be shown to exist. Jaccard distance was shown to be related
to language entropy, and various limit based extensions of the Jaccard
distance were shown to exist or not exist. The natural entropy based
distance function was shown to be an ultra-pseudo-metric, and some facts
were proven about the function that show it likely to be impractical.
Finally, the paper introduces an entropy-based distance function and proves
that function to be a pseudo-metric, as well as granular under certain conditions. 

In this paper several formulations of entropy are developed, and it is natural
to consider which would be the best to use. In a practical sense it does not
matter since all formulations are equivalent (Theorem~\ref{language_entropy}) and 
can be computed using Shannon's determinant-based method
\cite{shannon48mathematical}. However, conceptually, it can be argued that
$\lim_{n\rightarrow\infty}\frac {log \left|W_{\leq n}(L)\right|} {n}$ is the preferable
formulation. First, there is a notational argument that prefers using limits
that exist. This is a limit that exists (Corollary~\ref{language_entropy_cor}), whereas many other limit formulations do
not. Second, this limit captures more readily the concept of ``number of bits
per symbol'' that Shannon intended.  Because regular languages can have strings
with staggered lengths, using $W_n$ forces the consideration of possibly empty
sets of strings of a given length. This creates dissonance when the language
has non-zero entropy. Instead, the monotonically growing $W_{\leq n}$ more
clearly encodes the intuition that the formulation is expressing the number of
bits needed to express the next symbol among all words in the language.

Apart from expanding to consider context-free languages and other languages (\cite{Cui_Dang_etal}), one investigation
that is absent from this paper is the determination of similarity between
languages that are disjoint but obviously similar (i.e.  $aa^*$ and $ba^*$).
A framework for addressing such problems is provided in
\cite{cui2013similarity}, but finding metrics capturing
such similarities can be fodder for future efforts.

\bibliography{refs}

\begin{thebibliography}{10}

\bibitem{chaos}
F.~Blanchard, E.~Glasner, S.~Kolyada, and A.~Maass.
\newblock On {L}i-{Y}orke pairs.
\newblock {\em J. Reine Angew. Math.}, 547:51 -- 68, 2002.

\bibitem{ceccherini2003entropy}
T.~Ceccherini-Silberstein, A.~Mach{\`i}, and F.~Scarabotti.
\newblock On the entropy of regular languages.
\newblock {\em Theoretical computer science}, 307(1):93 -- 102, 2003.

\bibitem{chan2003re}
C.~Chan, M.~Garofalakis, and R.~Rastogi.
\newblock Re-tree: an efficient index structure for regular expressions.
\newblock {\em The VLDB Journal—The International Journal on Very Large Data
  Bases}, 12(2):102 -- 119, 2003.

\bibitem{chang2014}
C.~Chang.
\newblock {Algorithm for the complexity of finite automata}.
\newblock {\em 31sth Workshop on Combinatorial Mathematics and Computation
  Theory}, pages 216 -- 220, 2014.

\bibitem{chomsky1958finite}
N.~Chomsky and G.~Miller.
\newblock Finite state languages.
\newblock {\em Information and Control}, 1(2):91 -- 112, 1958.

\bibitem{cui2013similarity}
C.~Cui, Z.~Dang, T.~Fischer, and O.~Ibarra.
\newblock Similarity in languages and programs.
\newblock {\em Theoretical Computer Science}, 498:58 -- 75, 2013.

\bibitem{Cui_Dang_etal}
C.~Cui, Z.~Dang, T.~Fischer, and O.~Ibarra.
\newblock Information rate of some classes of non-regular languages: an
  automata-theoretic approach (extended abstract).
\newblock In {\em Mathematical foundations of computer science 2014. {P}art
  {I}}, volume 86343 of {\em Lecture notes in Comput. Sci.}, pages 232 -- 243.
  Springer, Heidelberg, 2014.

\bibitem{dassow2009similarity}
J.~Dassow, G.~Mart{\'\i}n Reyes, and F.~Vico Vela.
\newblock A similarity measure for cyclic unary regular languages.
\newblock 2009.

\bibitem{francis1961qr}
J.~Francis.
\newblock The qr transformation a unitary analogue to the lr
  transformation—part 1.
\newblock {\em The Computer Journal}, 4(3):265 -- 271, 1961.

\bibitem{Hasselblatt_Katok}
B.~Hasselblatt and A.~Katok.
\newblock {\em A first course in dynamics: With a panorama of recent
  developments}.
\newblock Cambridge University Press, New York, 2003.

\bibitem{Hopcroft_Ullman}
J.~Hopcroft and J.~Ullman.
\newblock {\em Introduction to automata theory, languages, and computation}.
\newblock Addison-Wesley Publishing Company, Inc., 1979.

\bibitem{kolmogorov1959entropy}
A.~Kolmogorov.
\newblock Entropy per unit time as a metric invariant of automorphisms.
\newblock In {\em Dokl. Akad. Nauk SSSR}, volume 124, pages 754 -- 755, 1959.

\bibitem{kuich1970entropy}
W.~Kuich.
\newblock On the entropy of context-free languages.
\newblock {\em Information and Control}, 16(2):173 -- 200, 1970.

\bibitem{li1991relationship}
W.~Li.
\newblock On the relationship between complexity and entropy for markov chains
  and regular languages.
\newblock {\em Complex systems}, 5(4):381 -- 399, 1991.

\bibitem{lind1995symbolic}
D.~Lind and B.~Marcus.
\newblock {\em Symbolic dynamics and coding}.
\newblock Cambridge, 1995.

\bibitem{marklov2015dynamical}
J.~Marklof and C.~Ulcigrai.
\newblock Lecture notes for dynamical systems and ergodic theory, 2015-2016.
\newblock \url{http://www.maths.bris.ac.uk/~majm/DSET/index.html}.

\bibitem{Rothblum_expansion_of_sums}
U.~Rothblum.
\newblock Expansion of sums of matrix powers.
\newblock {\em {SIAM Review}}, 23:143 -- 164, 1981.

\bibitem{Rothblum_chapter}
Uriel Rothblum.
\newblock Chapter 9, nonnegative matrices and stochastic matrices.
\newblock In {\em Handbook of Linear Algebra}. (eds: L. Hogben), Chapman and
  Hall / CRC, 2007.

\bibitem{schneider2015topological}
F.~Schneider and D.~Borchmann.
\newblock Topological entropy of formal languages.
\newblock {\em arXiv preprint arXiv:1507.03393}, 2015.

\bibitem{shannon48mathematical}
C.~Shannon.
\newblock A mathematical theory of communication.
\newblock {\em The Bell System Technical Journal}, 27:379 -- 423, 1948.

\bibitem{simonsen2006computability}
J.~Simonsen.
\newblock On the computability of the topological entropy of subshifts.
\newblock {\em Discrete mathematics and theoretical computer science}, 8(1):83
  -- 95, 2006.

\bibitem{sinai1959notion}
Y.~Sinai.
\newblock On the notion of entropy of a dynamical system.
\newblock In {\em Dokl Akad Nauk SSSR}, volume 124, pages 768 -- 771, 1959.

\end{thebibliography}

\newpage
\appendix

%%%%%%%%%%%%%%%%%%%%%%%%%%%%%%%%%%%%%%%%%%%%%%%%%%%%%%%%%%%%%%%%%%%%%%%%%%%%%%%%%%%%%%%%%%%%%%%%%
\nop{\section{Appendix}

In this appendix we give proofs of the results that appear in the paper.
\medskip

\noindent \textbf{Proof of Proposition \ref{prop:n_jaccard_ex}.}

Let $L_1 = a^*$, $L_2 = (aa)^*$, and $L_3 = a(aa)^*$.  Fix $n\in\mathbb{N}$.
If $n$ is even, then $J_n'(L_1, L_2 ) = 0$, and if $n$ is odd, then $J_n'( L_1, L_3 ) = 0$. \qed

\medskip

\noindent \textbf{Proof of Theorem \ref{jaccard pseudo-metric}.}

The fact that $J_n$ is a pseudo-metric follows from the fact that the standard Jaccard distance for finite sets is a metric.

(1) Let $S= \{L_1, \ldots, L_k\}$ be a fixed finite set of regular languages. For each $i \neq j$ there exists an $n_{i,j}$ such that $\left|W_{n_{i,j}}(L_i \triangle L_j)\right| \neq 0$ since $L_i \neq L_j$ and only one $L_i$ can
be $\emptyset$.
Let $n = \max_{i,j} n_{i,j}$.  Then $J_n$ is a metric over $S$.
Every regular language $L_i$ contains a word whose length is at most $s(L_i)$.
Now, we simply observe that $s(L_i \triangle L_j) \leq (s(L_i) + 1)(s(L_j) + 1) - 1$.

(2) Let $n$ be an arbitrary number and let $\Sigma' = \Sigma\cup \{z\}$,
where $z\notin \Sigma$. Take an arbitrary regular language $L$ over $\Sigma$.
Construct a regular language $L' = L \cup \{z^{n + 1}\}$ over $\Sigma'$. $L'$ is the language $L$ with the addition of the element $z^{n+1}$. When $L$ is
considered over alphabet $\Sigma'$, we have: $J_n(L, L') = 0$. \qed

\medskip

\noindent \textbf{Proof of Theorem \ref{sum of matrices}.}  

See \cite{Rothblum_chapter} for a background on linear algebra.
The largest eigenvalue of a non-negative matrix is at least the value of the smallest sum of the entries in a row of an irreducible component.
Because the adjacency matrix for a DFA has integer entries, this implies that either $\lambda = 0$ or $\lambda \geq 1$.
If $\lambda = 0$, then $A$ is nilpotent (in other words, there exists an $n'$ such that $A^{n'} = 0$), which means $L$ is finite.
So assume $\lambda \geq 1$.

First consider the case when $\lambda = 1$.
For all $p>0$, we have that $\sum_{i=1}^n x^p - \int_{1}^n x^p dx \leq O\left(n^p\right)$, so $s_\ell$ is well-defined (it should be clear that $s_\ell$ is well-defined if $d=0$).
Moreover, it quickly follows from Rothblum's theorem that $\lim_{n \rightarrow \infty} \frac{1}{n^{d+1}} \sum_{i=1}^n A^i = \sum_{i=0}^{q-1}s_\ell$.

Finally, suppose that $\lambda > 1$. 
Let $\epsilon > 0$ be an arbitrary number.
Let $N = qn + k$, $N_* \approx q(n - \log^2(n))$ such that $N_* \equiv k(mod\ q)$, and $n_* = (N_* - k)/q$.
For a matrix $M$, let $\|M\|_e$ denote the maximum magnitude among the entries of $M$.
Notice that the following terms converge to zero:
\begin{itemize}
	\item $\|\lambda^{-N}\sum_{i=1}^{N_*}A^i\|_e \leq O\left(n\lambda^{-\log^2(n)}\right) \leq O\left( n^{-1} \right)$.
	\item For all $\ell$ and $n' \geq n_*$, we have that $\|S_\ell(n') - S_\ell(n)\|_e / \|S_\ell(n)\|_e \leq O\left(\frac{\log^2(n)}{n}\right)$.
	\item For all $\ell$ and $n' \geq n_*$, Rothblum's theorem states that $\|\left(A/\lambda\right)^{qn'+\ell} - S_\ell(n')\|_e$ converges to $0$ exponentially.
\end{itemize}
Let $\delta > 0$ be a number such that $\delta\left(2q\frac{1}{1 - \lambda^{-q}} + 1\right) < \epsilon$.
Let $n$ be large enough such that each of the following terms is less than $\delta$:
\begin{itemize}
	\item $\|\lambda^{-N}\sum_{i=1}^{N_*}A^i\|_e$,
	\item $n^{-d}\|S_\ell(n') - S_\ell(n)\|_e$ for all $0 \leq \ell < q$, $n' > n_*$, and 
	\item $n^{-d}\|\left(A/\lambda\right)^{qn'+\ell} - S_\ell(n')\|_e$ for all $0 \leq \ell < q$, $n' > n_*$.
\end{itemize}
By the triangle inequality, for all $0 \leq \ell < q$, $n' > n_*$,  we have that $n^{-d}\|\left(A/\lambda\right)^{qn'+\ell} - S_\ell(n)\|_e < 2\delta$.
In the following, let all indices of $S_\ell(x)$ be taken modulo $q$.
Because $\sum_{i = N_* + 1}^{N} A^{i} = \sum_{\ell = k-q+1}^{k} \sum_{n'=n_*}^n A^{qn'+\ell}$, we have that 
\[n^{-d}\left\|\sum_{\ell = k-q+1}^{k} \left(\sum_{n'=n_*}^n  A^{qn'+\ell} - \lambda^{qn'+\ell}S_\ell(n)\right)\right\| < q\sum_{n'=n_*}^{n}2 \delta\lambda^{q(n'+1)} \]
and so 
\[ \lambda^{-N} n^{-d} \left\| \sum_{i = N_* + 1}^{N} A^{i} - \sum_{\ell = k-q+1}^{k} \sum_{n'=n_*}^n \lambda^{qn'+\ell} S_\ell(n) \right\| < q\sum_{n'=n_*}^{n}2 \delta\lambda^{q(n' - n)} \leq 2q\delta \frac{1}{1 - \lambda^{-q}}.\]
By the triangle inequality, we have that 
\[ \left\| \lambda^{-N} n^{-d}\sum_{i = 1}^{N} A^{i} - \sum_{n'=n_*}^n \sum_{\ell = k-q+1}^k \lambda^{qn' + \ell - N} n^{-d}S_\ell(n) \right\|\]
\[ \leq \lambda^{-N} n^{-d} \left\|  \sum_{i = N_*+1}^{N} A^{i} - \sum_{n'=n_*}^n \sum_{\ell = k-q+1}^k \lambda^{qn'+\ell} S_\ell(n) \right\| + \left\| n^{-d} \lambda^{-N} \sum_{i=1}^{N_*} A^i\right\|,\]
which is less than $\delta\left(2q\frac{1}{1 - \lambda^{-q}} + 1\right) < \epsilon$.
In conclusion, 

\begin{eqnarray*}
\lim_{n \rightarrow \infty} n^{-d} \lambda^{-N} \sum_{i = 1}^{N} A^{i} 	& = &	\lim_{n \rightarrow \infty} \sum_{n'=n_*}^n \sum_{\ell = k-q+1}^k \lambda^{qn' + \ell - N} n^{-d}S_\ell(n) \\	
									& = & 	\sum_{\ell = k-q+1}^k \lambda^{\ell -k} \lim_{n \rightarrow \infty} n^{-d}S_\ell(n) \sum_{n'=n_*}^n \lambda^{q(n'-n)}  \\
									& = & 	\sum_{\ell = k-q+1}^k \lambda^{\ell -k} \frac{t_\ell}{1 - \lambda^{-q}}.
\end{eqnarray*}
Note that we can separate the above limits because of Rothblum's Theorem.\qed

\medskip

\noindent \textbf{Proof of Theorem \ref{language_entropy}.} 
 
Let $\lambda$ be the topological entropy of the sofic shift.

Let $(Q, \Sigma, \delta, q_0, F)$ be a DFA for $L$, and let $w = w_1, \ldots, w_n \in L$.
For brevity, let $n' = |Q|$.
Because we are working with some fixed language, we assume that $n' \leq O(1)$.
Recall that the Pumping Lemma states that when $n > n'$, there exists a pair $i,j$ such that $1 \leq i < j \leq n'$ and 
\[ w_1, \ldots, w_{i-1} \left(w_i, \ldots, w_j\right)^* w_{j+1} \ldots, w_n \subseteq L.\]
The proof of this statement uses the fact that if the states of $Q$ seen as $w$ streams by are $q_0, q_1, \ldots, q_n$, then there exists a pair $i,j$ as above such that $q_i = q_j$ by the pigeon hole principle.

We claim that $\{q_i, \ldots, q_j\}$ are vertices in $G$.
For $i < \ell < j$, each vertex $q_\ell$ has an incoming edge (from vertex $q_{\ell-1})$ and an outgoing edge (to vertex $q_{\ell+1}$).
Because $q_i = q_j$, this vertex also has an incoming edge (from vertex $q_{j-1})$ and an outgoing edge (to vertex $q_{i+1}$).
Therefore, this cycle is part of the essential graph.
This proves the claim.

We can iterate this procedure on the word $w_1, \ldots, w_{i-1}, w_{j+1} \ldots, w_n$ to find another subword that is admissible.
We can inductively do this until at most $n'$ characters remain.
By construction, if vertices $q_i,q_j \in G$ and $i \leq \ell \leq j$, then $q_\ell \in G$.
It follows that there exists an $i$ and a $j$ such that $q_i, \ldots, q_j$ is in $G$ and $j - i \geq n - n'$.
Therefore the number of words in $L$ of length $n$ is at most the number of admissible blocks of length $n-O(1)$ times $O(|\Sigma|^{O(1)})$ choices for the prefix $w_1, \ldots, w_{i-1}$ and the suffix $w_{j+1}, \ldots, w_n$, which implies that 
\[ \limsup_{n \rightarrow \infty} \frac{\log\left| W_n \right|}{n} \leq \lambda . \]

Let $w = w_1,w_2, \ldots, w_{n-2n'}$ be an admissible block from the sofic shift using vertices $q_1', q_2' \ldots, q_{n-2n'}'$.
By the definition of a trim graph, there exists paths in our DFA $q_0, q_1, \ldots, q_i$ and $q_j, q_{j+1}, \ldots, q_{k}$ such that $q_i = q_1'$, $q_j = q_{n-2n'}'$ and $q_k \in F$.
We may choose these paths to be minimal, which implies that no state is repeated.  Thus, $i \leq n'$ and $k-j \leq n'$.
Therefore the path $q_0, q_1, \ldots, q_i , q_2', \ldots, q_{n-2n'} , q_{j+1}, \ldots, q_k$ is a valid path in our DFA of length between $n-2n'$ and $n$, and corresponds to a word in $L$ that contains $w$ as a subword.

So each admissible block from the sofic shift of length $n-2n'$ appears in some word of $L$ whose length is between $n-2n'$ and $n$.
Each word in $L$ of length at most $n$ may contain at most $2n'$ distinct admissible blocks of length $n-2n'$ (one for each substring starting
at offsets 0, 1, 2, ..., $2n'$).
Therefore, there exists an $m$ such that $n-2n' \leq m \leq n$ and $|W_m| \geq \frac{1}{(2n')^2} \left|B_{n-2n'}(G)\right|$.
Because $n' \leq O(1)$, this proves the second part of the theorem with $c = 2n'$. It also implies that
\[ \limsup_{n \rightarrow \infty} \frac{\log\left| W_n \right|}{n}
\geq \lambda . \]
Which suffices to prove the first part of the theorem.  \qed

\medskip

\noindent \textbf{Proof of Corollary \ref{language_entropy_cor}.}

Let $\lambda$ be the topological entropy of the sofic shift.

To show that $\lim_{n\rightarrow\infty} \frac{\log \left| W_{\leq n}(L)\right|}{n} = \lambda$ we will show that 
\[\lambda \leq \liminf_{n\rightarrow\infty}\frac{\log \left| W_{\leq n}(L)\right|}{n} \leq \limsup_{n\rightarrow\infty} \frac{\log \left| W_{\leq n}(L)\right|}{n} \leq \lambda.\]

Let $\left| W_n(L)\right| = a_n$.  For fixed $n$, let $a_{k_n}=\max\left(a_1,\ldots, a_n\right)$.  Observe that,
\begin{eqnarray*}
\limsup_{n\rightarrow\infty} \frac{\log \left| W_{\leq n}(L)\right|}{n} &=& \limsup_{n\rightarrow\infty} \frac{\log \left( a_1+\cdots + a_n\right)}{n} \\
&\leq& \limsup_{n\rightarrow\infty}\frac{\log\left( na_{k_n}\right)}{n}\\
&\leq& \limsup_{n\rightarrow\infty}\frac{\log\left( a_{k_n}\right)}{k_n} = \lambda
\end{eqnarray*}
by Theorem \ref{language_entropy}.

For the lower bound we will use the second part of Theorem \ref{language_entropy}.  For fixed $n$, let $n_k$ be the largest element from the subsequence $\left( n_i\right)$ (from Theorem \ref{language_entropy}) such that $n_k \leq n$.  In this case, $n-n_k \leq c$ where $c$ is given in the theorem.  Thus,
\begin{eqnarray*}
\liminf_{n\rightarrow\infty}  \frac{\log \left| W_{\leq n}(L)\right|}{n} &=& \liminf_{n\rightarrow\infty} \frac{\log \left( a_1+\cdots + a_n\right)}{n} \\
&\geq& \liminf_{n\rightarrow\infty}\frac{\log \left(a_{n_k}\right)}{n_k + (n-n_k)}\\
&\geq& \liminf_{n\rightarrow\infty}\frac{\log \left(a_{n_k}\right)}{n_k + c} =\lambda. \qed
\end{eqnarray*}

\medskip

\noindent \textbf{Proof of Lemma \ref{lemma:union_ent}.}

Each part is proven in turn:
    
      (1) When $L_1\subseteq L_2$, $W_{\leq n}(L_1)\subseteq W_{\leq n}(L_2)$.  Thus
            $\lim_{n\rightarrow\infty}\frac {\log |W_{\leq n}(L_1)|}{n}
            \leq \lim_{n\rightarrow\infty}\frac {\log |W_{\leq n}(L_2)|}{n}$.
            
      (2)  This is a consequence of Theorem \ref{entropy_irr_components}.

      (3) Notice that $L_1\cup\overline{L_1}=\Sigma^*$ and $h(\Sigma^*)=\log \left| \Sigma\right|$.  The result follows by part 2.

      (4)  Notice that $L_2 = (L_2\setminus L_1) \cup (L_1\cap L_2)$.  Since $L_1\cap L_2 \subset L_1$ we have that $h(L_1\cap L_2)\leq h(L_1)< h(L_2)$ by part 1.  Thus 
      $h(L_2) = h((L_2\setminus L_1) \cup (L_1\cap L_2)) =\max (h(L_2\setminus L_1), h(L_1\cap L_2)) = h(L_2\setminus L_1)$.
      
      (5)  This is trivial. \qed

\medskip

\noindent \textbf{Proof of Theorem \ref{limits of cesaro}.}  

Part (1) easily follows from Lemma \ref{asymptotic growth}.
To see part (2), note that $(L_1 \cup L_2) \cap \overline{(L_1 \cap L_2)} = L_1 \triangle L_2$.
Therefore
\begin{eqnarray*}
\frac{|W_{\leq n}(L_1 \triangle L_2)|}{|W_{\leq n}(L_1 \cup L_2)|} &=& \frac{|W_{\leq n}(L_1 \cup L_2)| - |W_{\leq n}(L_1 \cap L_2)|}{|W_{\leq n}(L_1 \cup L_2)|}\\
								&=& \frac{2^{n(h(L_1 \cup L_2) - o(1))} - 2^{n(h(L_1 \cap L_2) - o(1))}}{2^{n(h(L_1 \cup L_2) - o(1))}}\\
								&=& 1 .
\end{eqnarray*}

For part (3), the above already implies that if $0 < J_C(L_1, L_2) < 1$, then $h(L_1 \cap L_2)$, $h(L_1 \triangle L_2)$, and $h(L_1 \cup L_2)$ are equal.
By symmetry, assume that $h(L_1) \leq h(L_2)$.
Because $L_1 \cap L_2 \subseteq L_1$ and $L_2 \subseteq L_1 \cup L_2$, by Lemma \ref{lemma:union_ent} we have that $h(L_1 \cap L_2) \leq h(L_1) \leq h(L_2) \leq h(L_1 \cup L_2)$.
Therefore all five terms are equal. \qed

\medskip

\noindent \textbf{Proof of Corollary \ref{log_Jaccard}.}

Observe the following:

\[\lim_{n\rightarrow\infty} \frac {\log \left|W_{\leq n}(L_1\triangle L_2)\right|}{\log \left|W_{\leq n}(L_1\cup L_2)\right|}  = \lim_{n\rightarrow\infty} \frac {\frac{1}{n}\log \left|W_{\leq n}(L_1\triangle L_2)\right|}{\frac{1}{n}\log \left|W_{\leq n}(L_1\cup L_2)\right|} =
%\frac{\lim_{n\rightarrow\infty}\frac{\log \left|W_{\leq n}(L_1\triangle L_2)\right|}{n}}{\lim_{n\rightarrow\infty} \frac{\log \left|W_{\leq n}(L_1\cup L_2)\right|}{n}} =
 \frac{h(L_1\triangle L_2)}{h(L_1\cup L_2)} = H_J(L_1,L_2).\]
Note that we can separate the limits because of Corollary \ref{language_entropy_cor}. \qed

\medskip

\noindent \textbf{Proof of Proposition \ref{thm:entdis_neq}.}
    WLOG, suppose that $h(L_1)<h(L_2)$. First,
    $L_1\cap L_2\subseteq L_1$ which implies that $h(L_1\cap L_2)\leq h(L_1)$.
    Second,  $L_2\subseteq L_1\cup L_2$, and therefore $h(L_2)\leq h(L_1\cup L_2)$.
    All together this gives $h(L_1\cap L_2)<h(L_1\cup L_2)$, which implies that 
    $h(L_1\cap L_2)\neq h(L_1\cup L_2)$.
    By Lemma~\ref{lemma:union_ent}, $h(L_1\cup L_2) = \max(h(L_1\cap L_2), h(L_1\triangle L_2))$.  Thus, 
    $h(L_1\cup L_2)=h(L_1\triangle L_2)$. \qed

\medskip

\noindent \textbf{Proof of Theorem \ref{entropy_distance_metric}.}

The first two conditions of an ultra-pseudo-metric are satisfied by the definition of $H$ and from the
reflexiveness of $\triangle$ and $\cup$.  We now have to verify the ultra-metric inequality.

Suppose $L_1,L_2,L_3$ are regular languages.  We need to show that 
\[H(L_1,L_3)\leq \max (H(L_1,L_2),H(L_2,L_3)).\]

\noindent \textbf{Case 1:}  Suppose $h(L_1)\neq h(L_2)$.  By Proposition \ref{thm:entdis_neq}, $H(L_1,L_2)=1$.  Since $H(L_1,L_2)$ is a number between $0$ and $1$, $H(L_1,L_3)\leq \max (H(L_1,L_2),H(L_2,L_3))$.  

\noindent \textbf{Case 2:}  Suppose $h(L_1)=h(L_2)$.  If $h(L_3)\neq h(L_2)$, then the above argument holds.  Thus, assume that $h(L_1)=h(L_2)=h(L_3)$.  By Lemma \ref{lemma:union_ent}, $h(L_1\cup L_3)=h(L_1\cup L_2)=h(L_2\cup L_3)$.  Hence, it suffices to show that $h(L_1\triangle L_3)\leq \max (h(L_1\triangle L_2),h(L_2\triangle L_3))$.  Using multiple applications of Lemma~\ref{lemma:union_ent} we observe,
\begin{eqnarray*}
h\left(L_1\triangle L_3\right) &=& \max \left( h\left(L_1\cap \overline{L_3}\cap L_2\right), h\left(L_1\cap \overline{L_3}\cap \overline{L_2}\right), h\left(\overline{L_1}\cap L_3 \cap L_2\right), h\left(\overline{L_1}\cap L_3 \cap \overline{L_2}\right)\right)\\
&\leq& \max \left( h\left(L_1\cap \overline{L_2}\right), h\left(\overline{L_1}\cap L_2\right), h\left(L_2\cap \overline{L_3}\right), h\left(\overline{L_2}\cap L_3\right)\right)\\
&=& \max \left( h\left(L_1\triangle L_2\right), h\left( L_2\triangle L_3\right)\right). \qed
\end{eqnarray*}

\newpage

\noindent \textbf{Proof of Theorem \ref{entropy_sum_metric}.}

The first two conditions of a pseudo-metric are satisfied by the definition of $H_S$ and from the
reflexiveness of $\triangle$ and $\cup$.  We now have to verify the triangle inequality.

Suppose $L_1,L_2,L_3$ are regular languages.  We need to show that 
\[H_S\left(L_1,L_3\right) = h\left(L_1\cap\overline{L_3}\right) + h\left(\overline{L_1}\cap L_3\right) \leq H_S\left(L_1,L_2\right)+H_S\left(L_2,L_3\right).\]
First observe,
\begin{eqnarray*}
h\left(L_1\cap\overline{L_3}\right) &=& \max\left(h\left( L_1\cap\overline{L_3}\cap L_2\right), h\left( L_1\cap\overline{L_3}\cap \overline{L_2}\right)\right)\\
&\leq& \max\left(h\left(L_2\cap \overline{L_3}\right), h\left(L_1\cap\overline{L_2}\right)\right).
\end{eqnarray*}
In a similar fashion, $h\left(\overline{L_1}\cap L_3\right)\leq h(\overline{L_1}\cap L_2)+h(\overline{L_2}\cap L_3)$.  Putting these together yields the desired result. \qed

\medskip

\noindent \textbf{Proof of Proposition \ref{granular}.}

Let $R_1 = L_1 \cup L_2$ and $R_2 = L_1 \cap L_2$. 
Notice that $H_S(L_1,R_1) = h(\overline{L_1}\cap L_2)$ and $H_S(R_1,L_2)=h(L_1\cap\overline{L_2})$.  Hence, 
\[H_S(L_1,L_2)=h(L_1\cap\overline{L_2})+h(\overline{L_1}\cap L_2) > \max(H_S(L_1,R_1), H_S(R_1,L_2)).\]
The statement involving $R_2$ is analogous. \qed

\medskip

\noindent \textbf{Proof of Proposition \ref{not_granular}.}

Note that $H_S(L_1,L_2) = h(L_1\cap \overline{L_2})$.
The proof breaks down into two cases:

\noindent \textbf{Case 1:} Suppose $h(L_1\cap\overline{L_2}\cap L)=h(L_1\cap\overline{L_2})$.  Then, 
\begin{eqnarray*}
h\left(L_1\cap \overline{L_2}\right) &=& h\left(L_1\cap\overline{L_2}\cap L\right) \\
&\leq& h\left(L\cap\overline{L_2}\right)\\
&\leq& H_S\left( L,L_2\right)\\
&\leq& \max \left(H_S\left(L_1,L\right), H_S\left(L,L_2\right)\right).
\end{eqnarray*}

\noindent \textbf{Case 2:} Suppose $h(L_1\cap\overline{L_2}\cap L) <  h(L_1\cap\overline{L_2})$.  Then, 
\begin{eqnarray*}
h\left(L_1\cap \overline{L_2}\right) &=& \max\left( h\left(L_1\cap \overline{L_2}\cap L\right), h\left(L_1\cap \overline{L_2}\cap \overline{L}\right)\right)\\
&=& h\left(L_1\cap \overline{L_2}\cap \overline{L}\right)\\
&\leq& h\left( L_1\cap\overline{L}\right)\\
&\leq& H_S\left( L_1,L\right)\\
&\leq& \max \left(H_S\left(L_1,L\right), H_S\left(L,L_2\right)\right). \qed
\end{eqnarray*}

}

\end{document}